\newcommand{\journal}[1]{}
\newcommand{\conf}[1]{#1}
\newcommand{\m}[1]{\ensuremath{\mathsf{#1}}}
\newcommand{\pid}[1]{\m{#1}}
\newcommand{\ctxc}[2][]{{\mathcal C}_{#1}[#2]}
\newcommand{\ctx}[1][\bullet]{\ctxc{#1}}
\newcommand{\emptyN}{\nil} 
\newcommand{\code}[1]{\texttt{#1}}
\newcommand{\nil}{\mathbf{0}}
\newcommand{\com}[2]{#1\;\code{-\hspace{-0.3mm}>}\;#2}
\newcommand{\gencom}{\com{\pid p.e}{\pid q}}
\newcommand{\sel}[3]{\com{#1}{#2 [#3]}}
\newcommand{\gensel}{\sel{\pid p}{\pid q}{\ell}}
\newcommand{\toa}{\to_a}
\renewcommand{\merge}{\sqcup}
\newcommand{\cond}[3]{\m{if}\, #1 \, \m{then} \, #2 \, \m{else} \, #3}
\newcommand{\gencond}{\cond{\pid p.e}{C_1}{C_2}}
\newcommand{\rec}[3]{\m{def} \, #1  =  #2 \, \m{in} \, #3}
\newcommand{\genrec}{\rec{X}{C_2}{C_1}}
\newcommand{\genrecb}{\rec{X}{B_2}{B_1}}
\newcommand{\pn}{\m{pn}}
\newcommand{\gencall}{X}
\definecolor{light-gray}{gray}{0.928}
\newcommand{\epp}[2]{[\![#1]\!]_{#2}}
\newcommand{\rname}[2]{\ensuremath{\left\lfloor\mbox{{#1}$|${#2}}\right\rceil}}
\newcommand{\asend}[2]{{#1}!{#2}}
\newcommand{\arecv}[1]{#1?}
\newcommand{\actor}[4][]{{#2} \triangleright^{#1}_{#3} {#4}}
\newcommand{\parp}{\mathrel{\boldsymbol{|}}}
\newcommand{\asel}[2]{{#1}\oplus#2}
\newcommand{\abranch}[2]{{#1}\&{#2}}
\newcommand{\precongr}{\preceq}
\newcommand{\defeq}{\stackrel{\Delta}{=}}
\newcommand{\rhop}{\rho_{\pid p}}
\newcommand{\rhoq}{\rho_{\pid q}}
\newcommand{\smallpar}[1]{\smallskip\noindent \textbf{\textit{#1.}}}
\newcommand{\upd}[3]{\m{upd}(#1,#2,#3)}
\newcommand{\update}[2]{\m{upd}(#1,#2)}
\newcommand{\eval}[2]{\mathrel{\downarrow_{#1(#2)}}}
\newcommand{\evals}[1]{\mathrel{\downarrow_{#1}}}
\newcommand{\hd}[2]{\m{next}_{#2}(#1)}
\newcommand{\assend}[4][]{{#2}.{#3}\;\code{-\hspace{-0.3mm}>}_{#1}\;{#4}}
\newcommand{\asrecv}[3][]{{#2}\;_{#1}\code{-\hspace{-0.3mm}>}\;{#3}}
\newcommand{\genasend}{\assend{\pid p}{e}{x}}
\newcommand{\genarecv}{\asrecv{\maybe v}{\pid q}}
\newcommand{\maybe}[1]{\hat{#1}}
\newcommand{\stepp}[2]{(\!|{#1}|\!)_{#2}}
\newtheorem{theorem}{Theorem}
\newtheorem{corollary}{Corollary}
\newtheorem{definition}{Definition}
\newtheorem{example}{Example}
\newtheorem{remark}{Remark}
\title{On Asynchrony and Choreographies}
\author{Lu\'is Cruz-Filipe
\institute{Department of Mathematics and Computer Science\\
University of Southern Denmark\\
Odense, Denmark}
\email{lcfilipe@imada.sdu.dk}
\and
Fabrizio Montesi
\institute{Department of Mathematics and Computer Science\\
University of Southern Denmark\\
Odense, Denmark}
\email{fmontesi@imada.sdu.dk}
}
\begin{document}

\maketitle

\begin{abstract}
Choreographic Programming is a paradigm for the development of concurrent software, where deadlocks 
are prevented syntactically.
However, choreography languages are typically synchronous, whereas many real-world systems have asynchronous 
communications.
Previous attempts at enriching choreographies with asynchrony rely on \emph{ad-hoc} constructions, whose 
adequacy is only argued informally.
In this work, we formalise the properties that an asynchronous semantics for choreographies should have:
messages can be sent without the intended receiver being ready, and all sent messages are eventually received.
We explore how out-of-order execution, used in choreographies for modelling 
concurrency, can be exploited to endow choreographies with an 
asynchronous semantics. Our approach satisfies the properties we identified.
We show how our development yields a pleasant correspondence with FIFO-based asynchronous messaging, modelled in a 
process calculus, and discuss how it can be adopted in more complex choreography models.

\end{abstract}

\section{Introduction}
\label{sec:intro}
Choreographic Programming~\cite{M13:phd} is a paradigm for developing concurrent software, where an
``Alice and Bob'' notation is used to prevent mismatched I/O actions syntactically.
An EndPoint Projection (EPP) can then be used to synthesise correct-by-construction process 
implementations~\cite{CM13,CM17:facs,QZCY07}.
Choreographies are used in different settings, including standards~\cite{BPMN,wscdl}, 
languages~\cite{chor:website,HMBCY11,pi4soa,savara:website}, specification models~\cite{CHY12,CM13,LGMZ08}, and design 
tools~\cite{BPMN,pi4soa,savara:website,wscdl}.

The key to preventing mismatched I/O actions in choreographies is that interactions between two (or more) processes are 
specified atomically, using terms such as $\gencom;C$ -- read ``process $\pid p$ sends the evaluation of expression $e$ 
to process $\pid q$, and then we proceed as the choreography $C$''.
Giving a semantics to such terms is relatively easy if we assume that communications are synchronous: we can just 
reduce $\gencom;C$ to $C$ in a single step (and update $\pid q$'s state with the received value, but this is orthogonal 
to this discussion).
For this reason, most research on choreographic programming focused on systems with a synchronous communications 
semantics.

However, many real-world systems use asynchronous communications.
This motivated the introduction of an ad-hoc reduction rule for modelling asynchrony in choreographies 
(Rule~\rname{C}{Async} in~\cite{CM13}).
As an example, consider the choreography $\com{\pid p.1}{\pid q};\com{\pid p.2}{\pid r}$ (where $1$ and $2$ are just 
constants). The special rule would allow 
for consuming the second communication immediately, thus reducing the choreography to $\com{\pid p.1}{\pid 
q}$. In general, roughly, this rule allows a choreography of the form $\gencom;C$ to execute an action in 
$C$ if this action involves $\pid p$ but not $\pid q$ (sends are non-blocking, 
receives are blocking).
This approach was later adopted in other works (see Section~\ref{sec:related}). Unfortunately, it also comes with a serious 
problem: it yields an unintuitive semantics, since a choreography can now reduce to a state that would normally not be 
reachable in the real world. In our example, specifically, in the real world $\pid p$ would have to send its first 
message to $\pid q$ before it could proceed to sending its other message to $\pid r$. This information is lost in the 
choreography reduction, where it appears that $\pid p$ can just send its messages in any order.
In~\cite{CM13}, this also translates to a misalignment between the structures of choreographies and their process 
implementations generated by EPP, since the latter use a standard asynchronous semantics with 
message buffers; see Section~\ref{sec:related} for a detailed discussion of this aspect.
Previous work~\cite{DY13} uses intermediate runtime terms in choreographies to represent asynchronous messages in 
transit, in an attempt at overcoming this problem. However, the adequacy of this approach has never been formally 
demonstrated.

In this paper, we are interested in studying asynchrony for choreographies in a systematic way. Thus, we 
first analyse the properties that an asynchronous choreography semantics should have (assuming standard FIFO duplex 
channels between each pair of processes) -- messages can be sent without the intended receiver being ready, and all sent 
messages are eventually received --
and afterwards formulate them 
precisely in a representative choreography language. Our study leads naturally to the construction of a
new choreography model that supports asynchronous communications, by capitalising on the characteristic feature of 
out-of-order execution found in choreographic programming.
We formally establish the adequacy of our asynchronous model, by proving that it respects the formal definitions of our 
properties.
Then, we define an EPP from our new model to an asynchronous process calculus.
Thanks to the accurate asynchronous semantics of our choreography model, we prove that the code generated 
by our EPP and the originating choreography are lockstep operationally equivalent.
As a corollary, our generated processes are deadlock-free by construction. Our development also has the pleasant 
property that programmers do not need to reason about asynchrony: they can just program thinking in the usual terms of 
synchronous communications, and assume that adopting asynchronous communications will not lead to errors.
We conclude by discussing how our construction can be systematically extended to more complex choreography models.

\paragraph{Contribution.}
The contribution of this article is threefold. 
First, we give an abstract characterisation of asynchronous semantics for choreography languages, which is formalised
for a minimal choreography calculus.
Secondly, we propose an asynchronous semantics for this minimal language, show that it is an instance of our 
characterisation, and discuss how it can be applied to other choreography calculi.
Finally, we prove a lockstep operational correspondence between choreographies and their process implementations, when 
asynchronous semantics for both systems are considered.

\paragraph{Structure.}
We present the representative choreography language in which we develop our work, together with its 
associated process calculus and EPP, in Section~\ref{sec:cc-sp}.
In Section~\ref{sec:async}, we motivate and introduce the properties we would expect of an asynchronous choreography
semantics, and introduce a semantics that satisfies these properties.
We show that we can define an asynchronous variant of the target process calculus in Section~\ref{sec:asp}, and 
extend the definition of EPP towards it, preserving the precise operational correspondence from the 
synchronous case.
We relate our development to other approaches for asynchrony in choreographies in
Section~\ref{sec:related}, before concluding in Section~\ref{sec:concl} with a discussion on the implications of our
work and possible future directions.


\section{Minimal Choreographies and Stateful Processes}
\label{sec:cc-sp}
We review the choreography model of Minimal Choreographies (MC) and its target calculus of Stateful Processes (SP), originally introduced in~\cite{CM17:facs}.
\journal{\todo{Replace MC/Minimal Choreographies with CC/Core Choreographies. Also look at \%J\ comments.}}%

\subsection{Minimal Choreographies}

The language of MC is defined inductively in Figure~\ref{fig:cc_syntax}.
\begin{figure}[ht]
\noindent
\begin{align*}
  C & ::= \gencom; C \journal{\mid \gensel; C} \mid \gencond \mid  \genrec \mid \gencall \mid \nil
\end{align*}
\caption{Minimal Choreographies, syntax.}
\label{fig:cc_syntax}
\end{figure}

Processes, ranged over by $\pid p,\pid q,\pid r,\ldots$, are assumed to run concurrently.
\journal{Processes interact through two types of communication actions, which we collectively denote by $\eta$.}%
\conf{Processes interact through value communications $\gencom$, which we also denote as $\eta$.}
\journal{In value communications $\gencom$,}\conf{Here,}
process $\pid p$ evaluates expression $e$ and sends the result to $\pid q$.
The precise syntax of expressions is immaterial for our presentation; in particular, expressions can access values
stored in $\pid p$'s memory.
\journal{In label selections $\gensel$ process $\pid p$ selects, from several possible behaviours of $\pid q$, the one identified
by label $\ell$.
We abstract over the set of labels, requiring only that it be finite and contain at least two elements.}%

The remaining choreography terms denote conditionals, recursion, and termination.
In the conditional $\gencond$, process $\pid p$ evaluates expression $e$ to decide whether the choreography should
proceed as $C_1$ or as $C_2$.
In $\genrec$, we define variable $X$ to be the choreography term $C_2$, which then can be called (as $\gencall$) inside
both $C_1$ and $C_2$.
Term $\nil$ is the terminated choreography, which we sometimes omit.
For a more detailed discussion of these primitives, we refer the reader to~\cite{CM17:facs}.\footnote{We relaxed the
  syntax of MC slightly with respect to~\cite{CM17:facs} by leaving the syntax of expressions unspecified, which allows
  for the simpler conditional $\gencond$ in line with typical choreography languages. This minor change simplifies our
  presentation.}

The (synchronous) semantics of MC is a reduction semantics that uses a total state function $\sigma$ to represent the
memory state at each process $\pid p$.
Since our development is orthogonal to the details of the memory implementation, we say that $\sigma(\pid p)$ is a
representation of the memory state of $\pid p$ (left unspecified) and write $\upd\sigma{\pid p}v$ to denote the
(uniquely defined) updated memory state of $\pid p$ after receiving a value~$v$.
Term $e\eval\sigma{\pid p}v$ denotes that locally evaluating expression $e$ at $\pid p$, with memory state $\sigma$,
evaluates to $v$. We assume that expression evaluation is deterministic and always terminates. (This 
formulation captures the essence of previous memory models for choreographies, 
cf.~\cite{CM13,CM16a}.)

Transitions are defined over pairs $\langle C,\sigma\rangle$, given by the rules in
Figure~\ref{fig:cc_semantics}. As usual, we omit the angular brackets in transitions.
\begin{figure}[ht]
\begin{eqnarray*}
&\infer[\rname{C}{Com}]
{
  \gencom;C,\sigma
  \to
  C, \upd\sigma{\pid q}v
}
{
  e\eval\sigma{\pid p}v
}
\qquad
\infer[\rname{C}{Ctx}]
{
  \genrec, \sigma \to 
  \rec{X}{C_2}{C'_1}, \sigma'
}
{
  C_1, \sigma \to C'_1, \sigma'
}
\\[1ex]
&\journal{\infer[\rname{C}{Sel}]
{
  \gensel;C, \sigma \to C, \sigma
}
{}
\quad}
\infer[\rname{C}{Then}]
{
  \gencond, \sigma \to C_1, \sigma
}
{
  e\eval\sigma{\pid p}\m{true}
}
\qquad
\infer[\rname{C}{Else}]
{
  \gencond, \sigma \to C_2, \sigma
}
{
  e\eval\sigma{\pid p}\m{false}
}
\\[1ex]
&
\infer[\rname{C}{Struct}]
{
  C_1, \sigma \to  C'_1, \sigma'
}
{
  C_1 \precongr C_2
  & C_2, \sigma \to C'_2, \sigma'
  & C'_2  \precongr C'_1
}
\end{eqnarray*}
\caption{Minimal Choreographies, synchronous semantics.}
\label{fig:cc_semantics}
\end{figure}
These rules are mostly standard, and we summarise their intuition.
In \rname{C}{Com}, the state of $\pid q$ is updated with the value received from $\pid p$ (which results from the
evaluation of expression $e$ at that process).
\journal{Label selections are no-ops (rule \rname{C}{Sel}), but play a role in the implementations as we discuss later.}%
Rules \rname{C}{Then} and \rname{C}{Else} are as expected, while rule \rname{C}{Ctx} allows reductions under recursive definitions.
Finally, rule \rname{C}{Struct} uses a structural precongruence $\precongr$, defined in
Figure~\ref{fig:cc_precongr}, which essentially allows (i)~independent communications to be swapped
(rule~\rname{C}{Eta-Eta}), (ii) recursive definitions to be unfolded (rule~\rname{C}{Unfold}), and (iii)~garbage
collection (rule~\rname{C}{ProcEnd}).
\begin{figure}[ht]
\begin{eqnarray*}
&
\infer[\rname{C}{Eta-Eta}]
{
  \eta;\eta'\equiv \eta';\eta
}
{
  \pn(\eta) \cap \pn(\eta') = \emptyset
}
\qquad
\infer[\rname{C}{Eta-Rec}]
{
  \rec{X}{C_2}{(\eta;C_1)}
  \equiv 
  \eta;\genrec
}
{
  \pn(C_i) \cap \pn(\eta) = \emptyset
}
\\[1ex]
&\infer[\rname{C}{Unfold}] 
{
  \rec{X}{C_2}{C_1}
  \precongr
  \rec{X}{C_2}{C_1[C_2/X]}
}
{}
\qquad
\infer[\rname{C}{ProcEnd}] 
{
  \rec{X}{C}{\nil} \precongr \nil
}
{}
\\[1ex]
&
\infer[\rname{C}{Eta-Cond}]
{
  \cond{\pid p.e}{(\eta;C_1)}{(\eta;C_2)}
  \equiv
  \eta;\gencond
}
{
  \{ \pid p, \pid q\} \cap \pn(\eta) = \emptyset
}
\\[1ex]
&\infer[\rname{C}{Cond-Cond}]
{
  \begin{array}{c}
    \cond{\pid p.e}{
      (\cond{\pid q.e'}{C_1}{C_2})
    }{
      (\cond{\pid q.e'}{C'_1}{C'_2})
    }
    \\
    \equiv
    \\
    \cond{\pid q.e'}{
      (\cond{\pid p.e}{C_1}{C'_1})
    }{
      (\cond{\pid p.e}{C_2}{C'_2})
    }
  \end{array}
}
{
  \pid p \neq \pid q
}
\end{eqnarray*}
\caption{Minimal Choreographies, structural precongruence.}
\label{fig:cc_precongr}
\end{figure}
The remaining rules are additional rules allowing communications to swap with other constructs, required for achieving (i).
These rules, taken together, endow the semantics of MC with out-of-order execution: interactions not at the top level 
may be brought to the top and executed if they do not interfere with other interactions that precede them in the 
choreography's abstract syntax tree.
We write $C\equiv C'$ for $C\precongr C'$ and $C'\precongr C$, and denote the set of process names in a choreography $C$
by $\pn(C)$.

Unsurprisingly, choreographies in MC are always deadlock-free. We use this property later on, to prove that the process 
code generated from choreographies is also deadlock-free.
\begin{theorem}[Deadlock-freedom]
  \label{thm:df-by-design}
  Given a choreography $C$, either $C\precongr\nil$ (termination) or, for every $\sigma$, there exist $C'$ and $\sigma'$
  such that $C,\sigma\to C',\sigma'$.
\end{theorem}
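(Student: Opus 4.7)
The plan is to proceed by structural induction on $C$, making essential use of the structural precongruence $\precongr$ through rule \rname{C}{Struct} whenever a reduction is not directly visible at the top of the syntax tree. I would implicitly restrict attention to well-formed (closed) choreographies, i.e.\ those in which every call $X$ occurs within the scope of some enclosing $\m{def}\, X = C_2 \, \m{in}$; otherwise a lone $X$ cannot reduce, so some such assumption is unavoidable.

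The easy inductive cases are communication and conditional. If $C = \gencom;C'$, then since expression evaluation is deterministic and total, $e\eval\sigma{\pid p} v$ for a unique $v$, and rule \rname{C}{Com} applies. If $C = \gencond$, then $e$ evaluates at $\pid p$ to either $\m{true}$ or $\m{false}$, and one of \rname{C}{Then} or \rname{C}{Else} fires. The base case $C = \nil$ is immediate (termination).

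The substantive case is $C = \genrec$. I would apply the induction hypothesis to the body $C_1$ under the assumption that the definition of $X$ is now in scope. If $C_1 \precongr \nil$, then via \rname{C}{ProcEnd} we obtain $C \precongr \nil$ (termination). Otherwise $C_1,\sigma \to C_1',\sigma'$ and rule \rname{C}{Ctx} gives the required reduction for $C$. The lone-call case $C = X$ has to be folded into this: when we reach an occurrence of $X$ inside a body, we use \rname{C}{Unfold} (via \rname{C}{Struct}) to replace it with $C_2$, and then re-apply the inductive reasoning to the unfolded term.

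The main obstacle is exactly this interaction between recursion and the need to expose a reducible action at the top of the syntax tree. Plain structural induction on $C$ does not directly work, because the unfolded body $C_1[C_2/X]$ is syntactically larger than $C_1$. One therefore needs either a stronger induction principle (for instance, induction on the size of $C$ after erasing recursive definitions, or on the structure of $C_1$ assuming $C_2$ is fixed) or a guardedness assumption ensuring that finitely many unfoldings suffice to expose a top-level communication or conditional. A further subtlety is that the exposed action need not already sit at the head of the choreography: rules like \rname{C}{Eta-Eta}, \rname{C}{Eta-Rec}, and \rname{C}{Eta-Cond} in $\precongr$ may first have to be applied to permute it to the top, so the inductive step should really produce, in parallel, both a witness reduction and the structural rearrangement that makes it visible.
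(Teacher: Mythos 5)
The paper does not actually prove this theorem: it is imported from~\cite{CM17:facs}, so there is no in-paper argument to match yours against. Judged on its own, your skeleton follows the standard route (structural induction, with \rname{C}{Struct} supplying $\precongr$-rearrangements), and the cases for $\nil$, $\gencom;C$ and the conditional are correct and essentially forced by totality and determinism of expression evaluation. Two of your side remarks are also genuinely on target: a bare $\gencall$ cannot reduce, and the literal statement fails for unguarded recursion such as $\rec{X}{X}{X}$, which is neither $\precongr\nil$ (rule \rname{C}{ProcEnd} requires the body to be literally $\nil$, and \rname{C}{Unfold} leaves this term fixed) nor reducible. So some well-formedness or guardedness convention on procedure bodies really is being assumed implicitly, and you were right to flag it.

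The gap is that the one case that carries all the difficulty --- $\genrec$ with $C_1$ a call, or more generally a body whose exposed redex only appears after unfolding --- is named but not discharged. You observe that plain structural induction breaks because $C_1[C_2/X]$ is larger than $C_1$, and you list two possible repairs (a size measure that ignores definitions, or a guardedness hypothesis bounding the number of unfoldings), but you do not commit to either or verify that it is well-founded; as it stands the proof simply stops at the hard step. Relatedly, applying the induction hypothesis ``to the body $C_1$ under the assumption that the definition of $X$ is in scope'' requires generalising the statement to open choreographies relative to an environment of definitions (or to a measure such as the number of leading unguarded calls), which you gesture at but do not set up. Finally, a minor over-complication: the swap rules \rname{C}{Eta-Eta}, \rname{C}{Eta-Rec} and \rname{C}{Eta-Cond} are not needed for progress --- the induction already descends to a top-level redex through \rname{C}{Ctx}, and only \rname{C}{Unfold} and \rname{C}{ProcEnd} from $\precongr$ are actually required --- so the last paragraph's worry about ``permuting the action to the top'' can be dropped.
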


\subsection{Stateful Processes and EndPoint Projection}

Minimal Choreographies are meant to be implemented in a minimalistic process calculus, also introduced in~\cite{CM17:facs},
called Stateful Processes (SP).
We summarize this calculus, noting that we make the same conventions and changes regarding expressions, labels, and
states as above.

The syntax of SP is reported in Figure~\ref{fig:sp_syntax}.
%
\begin{figure}[ht]
\journal{
\begin{align*}
B &{}::= \asend{\pid q}{e};B \mid \arecv{\pid p};B \mid \asel{\pid q}{\ell};B \mid \abranch{\pid p}{\{ \ell_i : B_i\}_{i\in I}};B
&
N,M &{}::= \actor{\pid p}{\sigma_p}{B} \mid (N \parp M) \mid \emptyN
\\[1ex]
& \mid \cond{e}{B_1}{B_2};B \mid \genrecb \mid \gencall \mid \nil
\end{align*}}%
\conf{
\[
B ::= \asend{\pid q}{e};B \mid \arecv{\pid p};B \mid \cond{e}{B_1}{B_2};B \mid \genrecb \mid \gencall \mid \nil
\qquad
N,M ::= \actor{\pid p}{\sigma_p}{B} \mid (N \parp M) \mid \emptyN
\]
}
\caption{Stateful Processes, syntax.}
\label{fig:sp_syntax}
\end{figure}

A term $\actor{\pid p}{\sigma_{\pid p}}{B}$ is a process with name $\pid p$, memory state $\sigma_{\pid p}$ and
behaviour $B$.
Networks, ranged over by $N,M$, are parallel compositions of processes, with $\emptyN$ being the inactive network.

Behaviours correspond to the local views of choreography actions.
The process executing a send term $\asend{\pid q}{e};B$ evaluates expression $e$ and sends the result to process
$\pid q$, proceeding as $B$.
The dual receiving behaviour $\arecv{\pid p};B$ expects a value from process $\pid p$, stores it in its memory and
proceeds as $B$.
\journal{Term $\asel{\pid q}{l};B$ sends a label $\ell$ to process $\pid q$.
Selections are received by the branching term $\abranch{\pid p}{\{ \ell_i : B_i\}_{i\in I}}$, which upon receiving one
of the labels $\ell_i$ proceeds according to $B_i$.
Branching terms must offer at least one branch.}%
The other terms are as in MC.

These intuitions are formalized in the synchronous semantics of SP, which is defined by the rules in
Figure~\ref{fig:sp_semantics}.
%
\begin{figure}[ht]
\begin{eqnarray*}
&\infer[\rname{P}{Com}]
  {
    \actor{\pid p}{\sigma_{\pid p}}{\asend{\pid q}{e};B_1}
    \parp
    \actor{\pid q}{\sigma_{\pid q}}{\arecv{\pid p};B_2}
    \to
    \actor{\pid p}{\sigma_{\pid p}}{B_1}
    \parp
    \actor{\pid q}{\update{\sigma_{\pid q}}v}{B_2}
  }
  {
    e\evals{\sigma_{\pid p}}v
  }
\\[1ex]
\journal{
&\infer[\rname{P}{Sel}]
  {
    \actor{\pid p}{\sigma_{\pid p}}{\asel{\pid q}{\ell_j};P}
    \parp
    \actor{\pid q}{\sigma_{\pid q}}{\abranch{\pid p}{\{ l_i : Q_i\}_{i\in I}}}
    \to
    \actor{\pid p}{\sigma_{\pid p}}{P}
    \parp
    \actor{\pid q}{\sigma_{\pid q}}{Q_j}
  }
  {j \in I}
\\[1ex]}
&
\infer[\rname{P}{Then}]
{
  \actor{\pid p}{\sigma_{\pid p}}{\cond{e}{P_1}{P_2}}
  \to
  \actor{\pid p}{\sigma_{\pid p}}{P_1}
}
{
  e\evals{\sigma_{\pid p}}\m{true}
}
\qquad
\infer[\rname{P}{Else}]
{
  \actor{\pid p}{\sigma_{\pid p}}{\cond{e}{P_1}{P_2}}
  \to
  \actor{\pid p}{\sigma_{\pid p}}{P_2}
}
{
  e\evals{\sigma_{\pid p}}\m{false}
}
\\[1ex]
&
\infer[\rname{P}{Ctx}]
  {
    \actor{\pid p}{\sigma_{\pid p}}{\rec{X}{Q}{P}} \parp N
    \to
    \actor{\pid p}{\sigma_{\pid p}}{\rec{X}{Q}{P'}} \parp N'
  }{
    \actor{\pid p}{\sigma_{\pid p}}{P} \parp N \to \actor{\pid p}{\sigma_{\pid p}}{P'} \parp N'
  }
\\[1ex]
&
\infer[\rname{P}{Struct}]
{
  N \to N'
}
{
  N \precongr M & M\to M' & M' \precongr N'
}
\qquad
\infer[\rname{P}{Par}]
  {
    N \parp M \to N' \parp M
  }
  {
    N \to  N'
  }
\end{eqnarray*}
\caption{Stateful Processes, synchronous semantics.}
\label{fig:sp_semantics}
\end{figure}
Rule \rname{P}{Com} models synchronous value communication: a process $\pid p$ wishing to send a value to $\pid q$ can
synchronise with a receive-from-$\pid p$ action at $\pid q$, and the state of $\pid q$ is updated accordingly.
\journal{Rule \rname{P}{Sel} is standard selection, with $\pid p$ selecting one of the branches offered by $\pid q$.}%
The remaining rules are standard.
This calculus once again includes a structural precongruence relation $\precongr$, defined in Figure~\ref{fig:sp_precongr}, which allows unfolding of recursive definition and garbage collection (removal of terminated processes).
%
\begin{figure}[ht]
\begin{eqnarray*}
&\infer[\rname{S}{Unfold}]
{
	\rec{X}{B_2}{B_1} \precongr \rec{X}{B_2}{B_1[B_2/X]}
}
{}
\\[1ex]
&
\infer[\rname{S}{PZero}]
{\actor{\pid p}{\sigma_{\pid p}}{\nil} \precongr \nil}
{}
\qquad
\infer[\rname{S}{NZero}]
{N \parp {\emptyN} \precongr N}
{}
\qquad
\infer[\rname{S}{ProcEnd}]
{\rec{X}{B}{\nil}\precongr\nil }
{}
\end{eqnarray*}
\caption{Stateful Processes, structural precongruence.}
\label{fig:sp_precongr}
\end{figure}

Networks in SP do not enjoy a counterpart to Theorem~\ref{thm:df-by-design}, as send/receive actions may be unmatched or
wrongly ordered.
To avoid this happening, we focus on networks that are generated automatically from choreographies in a faithful way,
by EndPoint Projection (EPP).

EPP is defined first at the process level, by means of a partial function $\epp{C}{\pid p}$.
The rules defining behaviour projection are given in Figure~\ref{fig:cc_epp}.
\begin{figure}[ht]
\begin{eqnarray*}
  &\epp{\gencom;C}{\pid r} =
  \begin{cases}
    \asend{\pid q}{e};\epp{C}{\pid r} & \text{if } \pid r = \pid p \\
    \arecv{\pid p};\epp{C}{\pid r} & \text{if } \pid r = \pid q \\
    \epp{C}{\pid r} & \text{otherwise}
  \end{cases}
  \qquad
  \journal{\epp{\gensel;C}{\pid r} =
  \begin{cases}
    \asel{\pid q}{\ell};\epp{C}{\pid r} & \text{if } \pid r = \pid p \\
    \abranch{\pid p}{\{ \ell : \epp{C}{\pid r} \}} & \text{if } \pid r = \pid q \\
    \epp{C}{\pid r} & \text{otherwise}
  \end{cases}}
  \\[1ex]
  &
  \epp{\gencond;C}{\pid r} =
  \begin{cases}
    \cond{e}{\epp{C_1}{\pid r}}{\epp{C_2}{\pid r}}; \epp{C}{\pid r}
    & \text{if } \pid r = \pid p \\
    \journal{( \epp{C_1}{\pid r} \merge \epp{C_2}{\pid r} );\epp{C}{\pid r} & \text{otherwise}}
    \conf{\epp{C_1}{\pid r};\epp{C}{\pid r} & \text{if } \pid r\neq\pid p \text{ and }\epp{C_1}{\pid r} = \epp{C_2}{\pid r}}
  \end{cases}
  \\[1ex]
  &\epp{\genrec}{\pid r} = 
  \rec{X}{\epp{C_2}{\pid r}}{\epp{C_1}{\pid r}}
  \qquad
  \epp{\nil}{\pid r} = \nil
  \qquad
  \epp{\gencall}{\pid r} = \gencall
\end{eqnarray*}
\caption{Minimal Choreographies, behaviour projection.}
\label{fig:cc_epp}
\end{figure}
Each choreography term is projected to the local action of the process that we are projecting.
For example, a communication term $\gencom$ is projected into a send action for the sender process $\pid p$, a receive
action for the receiver process $\pid q$, or nothing otherwise.
\conf{The rule for projecting a conditional requires that the behaviours of processes not aware of the 
choice are independent of that choice (see~\cite{CM17:facs} for details).}
\journal{The rule for projecting a conditional uses the standard (and partial) merging operator $\merge$: 
$B\merge B'$ is
isomorphic to $B$ and $B'$ up to branching, where the branches of $B$ or $B'$ with distinct labels are also
included (see~\cite{CM17:facs} for details).
This aspect is found repeatedly in most choreography models~\cite{BCDLDY08,CHY12,CM13,PGGLM14}.}

\begin{definition}[EPP from MC to SP]
Given a choreography $C$ and a state $\sigma$, the endpoint projection $\epp{C,\sigma}{}$ is the parallel composition of
the EPPs of $C$ wrt all processes in $\pn(C)$:
\[
\textstyle\epp{C,\sigma}{} =
\prod_{\pid p \in \pn(C)} \actor{\pid p}{\sigma(\pid p)}{\epp{C}{\pid p}}\,.
\]
\end{definition}

Given any two states $\sigma$ and $\sigma'$, $\epp{C,\sigma}{}$ is defined iff $\epp{C,\sigma'}{}$ is defined.
If this is the case, we say that $C$ is \emph{projectable} and that $N$ is the projection of $C,\sigma$.

\begin{theorem}[EPP Theorem]
\label{thm:oc-cc-sp}
If $C$ is projectable, then, for any $\sigma$:
\begin{itemize}
\journal{
\item (Completeness) if $C,\sigma \to C',\sigma'$, then $\epp{C,\sigma}{} \to \succ \epp{C',\sigma'}{}$;
\item (Soundness) if $\epp{C,\sigma}{} \to N$, then $C,\sigma \to C',\sigma'$ for some $C'$ and $\sigma'$ such that
  $\epp{C',\sigma'}{} \prec N$.}
\conf{
\item (Completeness) if $C,\sigma \to C',\sigma'$, then $\epp{C,\sigma}{} \to \epp{C',\sigma'}{}$;
\item (Soundness) if $\epp{C,\sigma}{} \to N$, then $C,\sigma \to C',\sigma'$ for some $C'$ and $\sigma'$ such that
  $\epp{C',\sigma'}{} \precongr N$.}
\end{itemize}
\end{theorem}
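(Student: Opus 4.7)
The plan is to prove both parts by induction on the derivation of the given reduction, but the real work lies in a collection of auxiliary lemmas about how projection interacts with the structural precongruences on both sides. I would first establish these auxiliary results, then discharge the inductions by a case analysis on the last rule applied.

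The preparatory lemmas I would need are: (i) projection commutes with substitution, so that $\epp{C_1[C_2/X]}{\pid p} = \epp{C_1}{\pid p}[\epp{C_2}{\pid p}/X]$, which is needed to handle \rname{C}{Unfold}; (ii) preservation of projectability, i.e.\ if $C$ is projectable and $C \precongr C'$ then $C'$ is projectable with the expected relation between their projections; and (iii) a simulation lemma for choreographic structural precongruence, stating that if $C \precongr C'$ then $\epp{C,\sigma}{} \precongr \epp{C',\sigma}{}$ (modulo garbage collection). The case of \rname{C}{Eta-Eta} is crucial here: the swap is invisible on each individual process, because any given process participates in at most one of two independent $\eta$'s, so its local projection is unchanged. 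For \rname{C}{Eta-Cond} and \rname{C}{Cond-Cond}, the fact that the branches of a conditional project identically at non-deciding processes (the conference-version constraint $\epp{C_1}{\pid r}=\epp{C_2}{\pid r}$) is what makes the two sides agree.

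For \textbf{completeness}, I case-split on the rule deriving $C,\sigma \to C',\sigma'$. In \rname{C}{Com}, the projection of $\gencom;C$ places $\asend{\pid q}{e};\epp{C}{\pid p}$ at $\pid p$ and $\arecv{\pid p};\epp{C}{\pid q}$ at $\pid q$, which synchronize by \rname{P}{Com}, updating $\sigma(\pid q)$ exactly as \rname{C}{Com} demands. \rname{C}{Then} and \rname{C}{Else} are matched by \rname{P}{Then}/\rname{P}{Else} at $\pid p$, and the projections at other processes are unchanged precisely because of the projectability condition on conditionals. \rname{C}{Ctx} follows by the induction hypothesis and \rname{P}{Ctx}/\rname{P}{Par}. \rname{C}{Struct} is handled using lemma~(iii) together with \rname{P}{Struct}.

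For \textbf{soundness}, I induct on the derivation of $\epp{C,\sigma}{} \to N$. The delicate point is that a network reduction may involve processes whose actions are not at the head of $C$, so I first use lemma~(iii) and the structural congruence rules (in particular \rname{C}{Eta-Eta} and \rname{C}{Eta-Cond}) to rewrite $C$ into an equivalent form $C''$ whose head directly corresponds to the network action that fired; then I apply the matching choreography rule and conclude by \rname{C}{Struct}. The main obstacle is precisely this: argueing that whenever two projected processes synchronize in the network, one can always permute the choreography so that the corresponding interaction floats to the top. This requires showing that if $\epp{C}{\pid p}$ starts with an action towards $\pid q$ and $\epp{C}{\pid q}$ starts with the matching action towards $\pid p$, then no choreographic term occurring strictly before $\gencom$ in $C$ mentions either $\pid p$ or $\pid q$, so all the required swaps in $\precongr$ are available. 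Once this pivoting lemma is in place, completeness gives the matching reduction and the result follows, with the $\precongr$ on the right-hand side absorbing the residual structural rearrangement and any garbage collection via \rname{S}{PZero}, \rname{S}{NZero}, and \rname{S}{ProcEnd}.
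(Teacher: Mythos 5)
The paper itself gives no proof of Theorem~\ref{thm:oc-cc-sp}: MC, SP and this EPP result are recalled from \cite{CM17:facs}, so there is no in-paper argument to compare against. Your proposal follows the standard route taken in that line of work --- induction on the reduction derivation, supported by (i) projection commuting with substitution, (ii) preservation of projectability under $\precongr$, and (iii) a lemma that choreographic $\precongr$ is mirrored by network $\precongr$ (exactly for \rname{C}{Eta-Eta}, whose swaps are invisible per process, and up to \rname{S}{ProcEnd}/\rname{S}{NZero} for garbage collection) --- and the case analyses you describe are the right ones. The only step I would ask you to spell out more is inside your ``pivoting'' lemma for soundness: when $\epp{C}{\pid p}$ heads with $\asend{\pid q}{e}$ and $\epp{C}{\pid q}$ heads with $\arecv{\pid p}$, you must also argue that these two local actions originate from the \emph{same} occurrence of a communication in $C$, not from two different $\pid p$-to-$\pid q$ communications. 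This follows because the first interaction of $C$ involving $\pid p$ also involves $\pid q$ and vice versa, so the two ``first interactions'' coincide; only then does the chain of \rname{C}{Eta-Eta}/\rname{C}{Eta-Cond} swaps bring that single interaction to the top. With that addition, and noting that in the conference formulation the conditional projection demands $\epp{C_1}{\pid r}=\epp{C_2}{\pid r}$ so no pruning relation is needed, your sketch is a faithful reconstruction of the intended proof.
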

\journal{
The \emph{pruning relation} $\prec$ (see~\cite{CHY12,CM13}) eliminates branches introduced by the merging operator
$\merge$ when they are not needed anymore to follow the originating choreography.
Pruning does not alter reductions, since the eliminated branches are never selected, 
cf.~\cite{CHY12,PGGLM14,LGMZ08}.}
Combining Theorem~\ref{thm:oc-cc-sp} with Theorem~\ref{thm:df-by-design} we get that the projections of choreographies
never deadlock.

\begin{corollary}[Deadlock-freedom by construction]
\label{cor:df-sp}
Let $N = \epp{C,\sigma}{}$ for some $C$ and $\sigma$. Then either $N \precongr \emptyN$ (termination) or there exists
$N'$ such that $N \to N'$.
\end{corollary}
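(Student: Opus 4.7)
The plan is to do a straightforward case analysis on the dichotomy given by Theorem~\ref{thm:df-by-design} applied to the choreography $C$ from which $N$ is projected, and then to use the EPP Theorem (Theorem~\ref{thm:oc-cc-sp}) to transfer reductions down to the process level. So I would start by fixing $C$ and $\sigma$ with $N = \epp{C,\sigma}{}$ (using the fact that $C$ is projectable, which is implicit in $N$ being well-defined), and invoking Theorem~\ref{thm:df-by-design} to obtain two cases.

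In the reductive case, there exist $C'$ and $\sigma'$ with $C,\sigma \to C',\sigma'$. The Completeness direction of Theorem~\ref{thm:oc-cc-sp} immediately gives $\epp{C,\sigma}{} \to \epp{C',\sigma'}{}$ (in the conference version; in the journal version one would postcompose with $\succ$, which still yields a one-step reduction since pruning does not alter reductions). Taking $N' = \epp{C',\sigma'}{}$ settles this case.

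In the terminating case, $C \precongr \nil$, and one has to argue $\epp{C,\sigma}{} \precongr \emptyN$. The key observation is that if $C \precongr \nil$, then for every $\pid p$, the behaviour $\epp{C}{\pid p}$ reduces, via \rname{S}{Unfold} and \rname{S}{ProcEnd} applied on projections of \rname{C}{ProcEnd}, to $\nil$; hence each component of the parallel composition is of the form $\actor{\pid p}{\sigma(\pid p)}{\nil}$, which via \rname{S}{PZero} is congruent to $\nil$, and then repeated use of \rname{S}{NZero} collapses the whole product to $\emptyN$. This is the only mildly delicate step, since it requires that behaviour projection commutes suitably with structural precongruence at the terminating fragment of MC; but since projection is defined compositionally and the only structural rules producing $\nil$ (namely \rname{C}{ProcEnd} and, implicitly, \rname{C}{Unfold}) have direct process-level counterparts, this reduces to an easy induction on the derivation of $C \precongr \nil$.

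The main (and only real) obstacle is therefore this last bookkeeping in the termination case, i.e.\ showing that the projection of a choreography structurally congruent to $\nil$ is a network structurally congruent to $\emptyN$. Everything else is a direct application of the two previously established theorems, which is why the result is stated as a corollary rather than a theorem.
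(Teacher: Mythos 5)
Your proposal is correct and matches the paper's intent exactly: the paper offers no explicit proof beyond the remark that the corollary follows by ``combining Theorem~\ref{thm:oc-cc-sp} with Theorem~\ref{thm:df-by-design}'', which is precisely your case analysis (completeness of EPP in the reductive case, collapse of the projection to $\emptyN$ via \rname{S}{ProcEnd}, \rname{S}{PZero} and \rname{S}{NZero} in the terminating case). The only quibble is that \rname{S}{Unfold} is not actually needed in the termination case, since $C \precongr \nil$ can only be derived via nested applications of \rname{C}{ProcEnd}.
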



\section{Asynchronous Choreographies}
\label{sec:async}
In order to define an asynchronous semantics for MC \journal{and SP }and prove their equivalence to the synchronous
semantics, we need to define precisely what we understand by ``asynchronous semantics'' and by ``equivalence''.
Intuitively, communications from $\pid p$ to $\pid q$ are asynchronous if: they occur in two steps (send and receive),
the send step does not require $\pid q$ to be ready, and if the send step is executed, then the corresponding receive 
step is also eventually executed.

We begin by defining asynchrony in MC.
We define the syntactic category of contexts, which can contain holes denoted as $\bullet$, formally 
defined in Figure~\ref{fig:context}.
\begin{figure}[ht]
\noindent
\begin{align*}
  \ctx & ::= \bullet; C \mid \eta; \ctx \mid
  \cond{\pid p.e}{\ctxc[1]{\bullet}}{\ctxc[2]{\bullet}} \mid \rec{X}{C}{\ctx} \mid \gencall \mid \nil
\end{align*}
\caption{Contexts for asynchrony.}
\label{fig:context}
\end{figure}

Structural precongruence is defined for contexts as for choreographies. We do not consider holes to be 
interactions, so they cannot be swapped with any other action.

Using contexts, we can define a function that identifies the type of the next action in context $\ctx$
involving a process $\pid r$: a communication ($\m{comm}$), a conditional ($\m{cond}$), or dependent on
how $\bullet$ is instantiated ($\bullet$).
This function is partial -- in particular, it is undefined if $\ctx$ does not contain either $\pid r$ or~$\bullet$.

\begin{definition}
  The \emph{next} action for $\pid r$ in a context $\ctx$, denoted $\hd {\ctx}{\pid r}$, is defined as follows.
  \begin{align*}
    \hd{\bullet;C}{\pid r} = \bullet &
    \qquad \hd{\nil}{\pid r} = \hd{X}{\pid r} = \mbox{undefined} \\
    \hd{\eta;\ctx}{\pid r}
    &= \begin{cases} \m{comm} & \mbox{if }\pid r\in\pn(\eta) \\ \hd{\ctx}{\pid r} & \mbox{otherwise} \end{cases}
    \\
    \hd{\cond{\pid p.e}{\ctxc[1]{\bullet}}{\ctxc[2]{\bullet}}}{\pid r}
    &= \begin{cases} \m{cond} & \mbox{if }\pid p=\pid r \\
          \hd{\ctxc[1]{\bullet}}{\pid r} & \mbox{if }\hd{\ctxc[1]{\bullet}}{\pid r}=\hd{\ctxc[2]{\bullet}}{\pid r} \\
          \mbox{undefined} & \mbox{otherwise} \end{cases}
    \\
    \hd{\rec{X}{C}{\ctx}}{\pid r}
    &= \hd{(\ctx)[C/X]}{\pid r} 
  \end{align*}
\end{definition}
The proviso in the second case of the definition of $\hd{\cond{\pid p.e}{\ctxc[1]{\bullet}}{\ctxc[2]{\bullet}}}{\pid r}$ 
ensures that the enabled action for $\pid r$ is uniquely defined.
Note that recursive definitions only need to be unfolded once.

This notion is well-defined, since structural precongruence cannot swap actions involving the same process.
We denote by $\ctx[\eta]$ the result of replacing every occurrence of $\bullet$ in $\ctx$ by $\eta$.
\journal{If $\eta$ is an interaction and $\ctx[\eta]$ is projectable, then $\hd{\ctx[\eta]}{\pid r}$ describes the top
action in the EPP of $\epp{\ctx[\eta]}{\pid r}$.}%

Since we expect asynchronous communications to occur in two steps, it is reasonable to expect MC to be extended with
additional actions.
By an \emph{extension} of MC, we understand a choreography language that has all the primitives of MC, an arbitrarily
larger set of interaction statements $\eta$, and an adequately extended function $\pn$.
Our definitions of contexts and $\m{next}$ extend automatically to these languages.

We can now formalise the intuition given at the beginning of this section.
\begin{definition}
  A semantics $\toa$ for an extension of MC is \emph{asynchronous} if the following conditions hold for any state 
$\sigma$. (We assume that $\nil;C$ stands for $C$.)
  \begin{itemize}
  \item If $\hd{\ctx}{\pid p}=\bullet$, then $\ctx[\gencom],\sigma\toa \ctx[\ast^v_{\pid q}],\sigma$,
    where $e\eval\sigma{\pid p}v$ and $\ast^v_{\pid q}$ is a statement that depends on $v$ and $\pid q$.
  \journal{\item If $\hd{\ctx}{\pid p}=\bullet$, then $\ctx[\gensel],\sigma\toa \ctx[\ast^\ell_{\pid q}],\sigma$,
    with $\ast^\ell_{\pid q}$ as above.}
  \item If $\hd{\ctx}{\pid q}=\bullet$, then $\ctx[\ast^v_{\pid q}],\sigma\toa \ctx[\nil],\upd\sigma{\pid q}v$.
  \journal{\item If $\hd{\ctx}{\pid q}=\bullet$, then $\ctx[\ast^\ell_{\pid q}],\sigma\toa \ctx[\nil],\sigma$.}
  \end{itemize}
\end{definition}

The second ingredient we need is a formal definition of equivalence. Our notion is similar to the
standard notion of operational correspondence from~\cite{G10}, but stronger, since we require that any 
additional term introduced by asynchronous reductions can always be consumed.

\begin{definition}
  An asynchronous semantics $\toa$ for an extension of MC is asynchronously equivalent to the semantics $\to$ of MC if:
  \begin{itemize}
  \item if $C,\sigma \to C',\sigma'$, then $C,\sigma \toa^{\ast} C',\sigma'$;
  \item if $C,\sigma \toa^{\ast} C',\sigma'$, then there exist $C''$ and $\sigma''$ such that
    $C,\sigma \to^{\ast} C'',\sigma''$ and $C',\sigma' \toa^{\ast} C'',\sigma''$.
  \end{itemize}
\end{definition}
Note that, in the last point of the above definition, the choreography $C'$ might include terms from the extended 
choreography language.


\journal{\todo{Change the notation for the concrete semantics arrow a}}
Now we are ready to present our extension of MC and define its asynchronous semantics.
We extend the syntax of choreographies with the runtime terms in Figure~\ref{fig:acc_syntax}. We call the resulting 
calculus aMC (for asynchronous MC).
\begin{figure}[ht]
\[
\eta ::= \ldots
\mid \genasend
\mid \genarecv
\journal{\mid \genasels
\mid \genaselr}
\qquad
\maybe v ::= x \mid v
\journal{\qquad
\maybe \ell ::= x \mid \ell}
\]
\caption{Asynchronous Minimal Choreographies, runtime terms.}
\label{fig:acc_syntax}
\end{figure}

The key idea is that, at runtime, a communication is expanded into multiple actions.
For example, a communication $\gencom$ expands in $\genasend$ -- a send action from $\pid p$ -- and
\journal{$\asrecv{\pid p}{\pid q}{x}$}%
\conf{$\asrecv{x}{\pid q}$} -- a receive action by $\pid q$.
\journal{The process subscripts at $\bullet$ are immaterial for the semantics, but are used later to define EPP.}%
The \journal{tag}\conf{variable} $x$ is used to specify that the original intention by the programmer was for that
message from $\pid p$ to reach that receive action at $\pid q$.
Thus, executing $\genasend$ replaces $x$ in the corresponding receive action with the actual value $v$ computed from $e$
at $\pid p$, yielding \journal{$\asrecv{\pid p}{\pid q}{v}$}\conf{$\asrecv{v}{\pid q}$}.
Finally, executing \journal{$\asrecv{\pid p}{\pid q}{v}$}\conf{$\asrecv{v}{\pid q}$} updates the state of $\pid q$.
\journal{The tags in the actions for label selections are not essential for propagating values as above, but they make the
semantics uniform.}%
Process names for the new runtime terms are defined
\journal{ as follows, ignoring process names in tags and in $\bullet$ subscripts.
\[
\pn(\genasend)=\pn(\genasels)=\{\pid p\}
\qquad
\pn(\genarecv)=\pn(\genaselr)=\{\pid q\}
\]}%
\conf{in the obvious way.}

The semantics for aMC includes:
\begin{itemize}
\item the rules from Figure~\ref{fig:acc_semantics}, replacing \rname{C}{Com}\journal{ and \rname{C}{Sel} }, and from
  Figure~\ref{fig:acc_precongr}, extending $\precongr$;
\item rules \rname{C}{Then}, \rname{C}{Else} and \rname{C}{Struct} from Figure~\ref{fig:cc_semantics};
\item the rules defining $\precongr$ (Figure~\ref{fig:cc_semantics}), with $\eta$ now ranging also over the new runtime
  terms.
\end{itemize}
\begin{figure}[ht]
\begin{eqnarray*}
&\infer[\rname{C}{Com-S}]
{
  \genasend;C,\sigma
  \toa
  C[v/x], \sigma
}
{
  e\eval\sigma{\pid p}v
}
\qquad\infer[\rname{C}{Com-R}]
{
  \journal{\asrecv{\pid p}{\pid q}v}\conf{\asrecv v{\pid q}};C,\sigma
  \toa
  C, \upd\sigma{\pid q}v
}
{
}
\journal{\\[1ex]
&\infer[\rname{C}{Sel-S}]
{
  \genasels;C,\sigma \toa G,C[\ell/x],\sigma
}
{
}
\qquad
\infer[\rname{C}{Sel-R}]
{
  \acom{\bullet_{\pid p}}{\pid q[\ell]}\ell;C,\sigma \toa C,\sigma
}
{
}}
\end{eqnarray*}
\caption{Asynchronous Minimal Choreographies, semantics (runtime terms).}
\label{fig:acc_semantics}
\end{figure}

\begin{figure}[ht]
\[
\infer[\rname{C}{Com-Unfold}]
{
  \gencom \precongr \genasend;\journal{\asrecv{\pid p}{\pid q}x}\conf{\asrecv x{\pid q}}
}
{\rule{0cm}{1em}}
\journal{\qquad
\infer[\rname{C}{Sel-Unfold}]
{
  \gensel \precongr \genasels;\acom{\bullet_{\pid p}}{\pid q[\ell]}x
}
{}}
\]
\caption{Asynchronous Minimal Choreographies, new rule for structural precongruence.}
\label{fig:acc_precongr}
\end{figure}

By the Barendregt convention, the \journal{tags}\conf{variables} introduced by unfolding a value communication 
\journal{or selection }are globally fresh.
(We assume that their scope is global.)
This allows us to use these \journal{tags}\conf{variables} to maintain the correspondence between the value being sent 
and that being received.

The key to the asynchronous semantics of aMC lies in the new swaps allowed by $\precongr$, due to the definition of 
$\pn$ for the new terms.

\begin{example}
\label{ex:async}
Let
$C \defeq \gencom; \com{\pid p.e'}{\pid r}$.
To execute $C$ asynchronously, we must expand it:
\[
\journal{C \precongr \genasend;\ \asrecv{\pid p}{\pid q}{x};\ \assend{\pid p}{e'}{\pid r}{y};\ \asrecv{\pid p}{\pid r}{y}}
\conf{C \precongr \genasend;\ \asrecv{x}{\pid q};\ \assend{\pid p}{e'}{y};\ \asrecv{y}{\pid r}}
\]
Using $\precongr$, we can swap the second term to the end:
\[
\journal{C \precongr \genasend;\ \assend{\pid p}{e'}{\pid r}{y};\ \asrecv{\pid p}{\pid r}{y};\ \asrecv{\pid p}{\pid q}{x}}
\conf{C \precongr \genasend;\ \assend{\pid p}{e'}{y};\ \asrecv{y}{\pid r};\ \asrecv{x}{\pid q}}
\]
So $\pid p$ can send both messages immediately, and $\pid r$ can receive its message before~$\pid q$.
\end{example}

The semantics of aMC also enjoys deadlock-freedom.
\begin{theorem}[Deadlock-freedom]
  \label{thm:async-df}
  Given a choreography $C$, either $C\precongr\nil$ (termination) or, for every $\sigma$, there exist $C'$ and $\sigma'$
  such that $C,\sigma\toa C',\sigma'$.
\end{theorem}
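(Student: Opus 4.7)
The plan is to mimic the structural-induction proof of Theorem~\ref{thm:df-by-design}, working on the shape of $C$ while adding cases for the new runtime terms. First I check whether $C \precongr \nil$; if so we are done. Otherwise I proceed by case analysis on the top-level constructor of $C$ (up to $\precongr$). For the constructors inherited from MC (conditionals, recursive definitions, and calls), the arguments from Theorem~\ref{thm:df-by-design} carry over essentially unchanged, since \rname{C}{Then}, \rname{C}{Else}, \rname{C}{Ctx}, \rname{C}{Unfold}, and \rname{C}{ProcEnd} all remain in force. The synchronous communication case $\gencom; C_1$ needs just one extra move: apply \rname{C}{Com-Unfold} to rewrite $C$ as $\genasend; \asrecv{x}{\pid q}; C_1$ with $x$ globally fresh, after which \rname{C}{Com-S} fires by determinism of expression evaluation at $\pid p$.

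The new cases concern the runtime terms. A top-level $\genasend; C'$ reduces by \rname{C}{Com-S}, and a top-level $\asrecv{v}{\pid q}; C'$ with an actual value $v$ reduces by \rname{C}{Com-R}. The delicate case is a top-level $\asrecv{x}{\pid q}; C'$ in which $x$ is still a variable, as \rname{C}{Com-R} is then not yet applicable. Here I invoke a well-formedness invariant (trivially true for every MC choreography, and preserved by both $\toa$ and $\precongr$) stating that every variable appearing in a runtime receive was produced by \rname{C}{Com-Unfold} together with a matching $\genasend$ that still resides somewhere in $C$. Using $\precongr$ — chiefly repeated applications of \rname{C}{Eta-Eta} — together with \rname{C}{Struct}, one brings this matching send to the top and fires \rname{C}{Com-S}.

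The main obstacle will be making this last step rigorous, since an intervening action might share the process $\pid p$ and therefore block the swap allowed by \rname{C}{Eta-Eta}. The way around this is to argue by an auxiliary induction on the number of actions lying between the top and the matching send: any blocking intervening action must involve $\pid p$, and such an action is itself top-level-reducible — a send from $\pid p$ fires by \rname{C}{Com-S}, a matured receive at $\pid p$ fires by \rname{C}{Com-R}, and a conditional owned by $\pid p$ fires by \rname{C}{Then} or \rname{C}{Else} — so we may use that reduction instead. Since $C$ is finite, this process terminates and yields an executable $\toa$-step in every case, establishing the theorem.
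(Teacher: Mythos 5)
The paper does not actually include a proof of Theorem~\ref{thm:async-df} (it is stated without proof, like Theorem~\ref{thm:df-by-design}), so your attempt can only be judged on its own merits. Your skeleton is right in two important respects: you correctly observe that the statement cannot hold for arbitrary aMC terms (a dangling $\asrecv{x}{\pid q}$ with no matching send is stuck but not $\precongr\nil$), so some reachability/well-formedness invariant must be smuggled into ``choreography''; and you correctly isolate the one genuinely new case, an uninstantiated receive $\asrecv{x}{\pid q}$ at the top. All the other cases (\rname{C}{Com-Unfold} followed by \rname{C}{Com-S}, \rname{C}{Com-R} on an instantiated receive, and the MC constructs) go through exactly as you say.

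The gap is in your resolution of that delicate case. You claim that any intervening action involving $\pid p$ that blocks the matching send ``is itself top-level-reducible'', but such an action $\beta$ is buried, not at the top, and hoisting it there via \rname{C}{Eta-Eta} can fail for reasons your induction does not see: $\beta$ may share some third process $\pid s$ with an action $\gamma$ that precedes it and that does \emph{not} involve $\pid p$, so $\gamma$ falls outside your measure (``actions between the top and the matching send that involve $\pid p$''), yet it blocks $\beta$; and $\gamma$ may itself be another uninstantiated receive whose own matching send lies still further down. Chasing these blocker chains can be made rigorous, but it needs a different well-founded measure (e.g.\ compatibility of the ``blocks'' relation with the original linear order of actions) that you do not supply. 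A much cleaner route, which sidesteps the whole swapping analysis, is to use the canonical form underlying the paper's later notion of well-formedness: every reachable $C$ satisfies $C\precongr\eta_1;\ldots;\eta_n;C^{MC}$ where each $\eta_i$ is an \emph{instantiated} receive and $C^{MC}$ contains no runtime terms (uninstantiated receives fold back onto their sends as $\gencom$). Then if $n\geq 1$ rule \rname{C}{Com-R} fires on $\eta_1$ via \rname{C}{Struct}, and if $n=0$ the claim reduces to Theorem~\ref{thm:df-by-design} for $C^{MC}$. You should either adopt that decomposition or repair the termination argument for your blocker-chasing.
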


\begin{theorem}
  The relation $\toa$ is an asynchronous semantics for aMC.
\end{theorem}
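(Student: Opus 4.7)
The plan is to verify both clauses of the definition of asynchronous semantics directly from the rules of aMC. Fix a state $\sigma$ and a context $\ctx$.

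For the send clause, assume $\hd{\ctx}{\pid p}=\bullet$ and $e\eval\sigma{\pid p}v$; we take $\ast^v_{\pid q}=\asrecv{v}{\pid q}$ and aim to show $\ctx[\gencom],\sigma\toa\ctx[\asrecv{v}{\pid q}],\sigma$. Applying \rname{C}{Com-Unfold} in context gives $\ctx[\gencom]\precongr\ctx[\genasend;\asrecv{x}{\pid q}]$ with $x$ globally fresh. The technical core is an \emph{extrusion lemma}: if $\hd{\ctx}{\pid p}=\bullet$, then $\ctx[\genasend;\asrecv{x}{\pid q}]\precongr\genasend;\ctx[\asrecv{x}{\pid q}]$. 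Granted this, \rname{C}{Com-S} reduces the right-hand side under $\sigma$ to $\ctx[\asrecv{x}{\pid q}][v/x]$, which collapses to $\ctx[\asrecv{v}{\pid q}]$ since $x$ is fresh and appears only in the newly introduced receive. Finally \rname{C}{Struct} packages the whole chain into a single $\toa$-step.

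We prove the extrusion lemma by induction on $\ctx$. The base case $\ctx=\bullet;C$ is immediate. For $\ctx=\eta;\ctx'$, the hypothesis $\hd{\eta;\ctx'}{\pid p}=\bullet$ forces $\pid p\notin\pn(\eta)$ (otherwise $\hd$ would return $\m{comm}$), so \rname{C}{Eta-Eta} lets us swap past $\eta$ after applying the induction hypothesis to $\ctx'$. For a conditional, the definition of $\hd$ forces the guard process to differ from $\pid p$ and the two branches to expose matching head actions, so the induction hypothesis applies to both branches and \rname{C}{Eta-Cond} extrudes the common $\genasend$. For a recursion, the definition of $\hd$ unfolds once; we mirror this via \rname{C}{Unfold}, apply the induction hypothesis to the unfolded context, and pull $\genasend$ out using \rname{C}{Eta-Rec}.

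The receive clause is proved by a symmetric argument: under $\hd{\ctx}{\pid q}=\bullet$, the analogous extrusion lemma yields $\ctx[\asrecv{v}{\pid q}]\precongr\asrecv{v}{\pid q};\ctx[\nil]$, and \rname{C}{Com-R} closes the reduction with the updated state $\upd\sigma{\pid q}v$. The main obstacle is the recursion case of the extrusion lemma, where the side condition of \rname{C}{Eta-Rec} requires $\pid p$ (resp.\ $\pid q$) to be absent from both the body and the remaining continuation; this must be reconciled with residual occurrences of $X$ after a single unfolding, and we exploit that $\hd$ commits to exactly one unfolding to keep the induction well-founded.
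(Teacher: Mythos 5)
Your proof takes essentially the same route as the paper's: the paper's sketch argues by induction on the definition of $\m{next}$ that, since $\pn(\genasend)=\{\pid p\}$, one can expand the communication via \rname{C}{Com-Unfold} and use swapping under $\precongr$ to rewrite $\ctx[\gencom]$ as $\genasend;\ctx[\asrecv{x}{\pid q}]$ --- which is precisely your extrusion lemma --- and then conclude with \rname{C}{Com-S}, \rname{C}{Com-R} and \rname{C}{Struct}, treating the receive clause symmetrically. Your write-up is simply a more detailed rendering of that argument (and is more candid about the \rname{C}{Eta-Rec} side condition in the recursion case, which the paper's sketch glosses over), so the decomposition and the key lemma coincide.
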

\begin{proof}[Proof (Sketch)]
  By induction on the definition of $\m{next}$.
  For the first point, observe that the only process name in $\genasend$ is $\pid p$, and therefore we can expand
  $\gencom$ and use swapping ($\precongr$) to rewrite $\ctx[\gencom]$ as $\genasend;\ctx[\asrecv{x}{\pid 
q}]$, if $\hd{\ctx}{\pid p}=\bullet$.
  The second point follows from a similar observation for \journal{$\asrecv{\pid p}{\pid q}v$}\conf{$\asrecv{v}{\pid q}$}.
\end{proof}

\begin{theorem}
  The semantics $\toa$ is asynchronously equivalent to the semantics of MC.
  \label{thm:async-equiv}
\end{theorem}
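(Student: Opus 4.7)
My plan is to establish the two clauses of asynchronous equivalence separately, each by induction on the length of the relevant reduction.

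For the first clause (completeness of $\toa^\ast$ with respect to $\to$), I would induct on the derivation of $C,\sigma\to C',\sigma'$ and case-split on the last rule. The only rule whose simulation in $\toa^\ast$ is not immediate is \rname{C}{Com}: I would simulate one synchronous value-communication by two asynchronous steps, first using \rname{C}{Com-Unfold} inside \rname{C}{Struct} to rewrite the redex $\gencom;C$ as $\genasend;\asrecv{x}{\pid q};C$, then firing \rname{C}{Com-S} (which evaluates $e$ and substitutes $v$ for $x$) and \rname{C}{Com-R} (which updates $\pid q$'s state). The rules \rname{C}{Then}, \rname{C}{Else}, and \rname{C}{Ctx} appear verbatim in $\toa$, and the aMC precongruence extends the MC one, so the \rname{C}{Struct} case also transfers.

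For the second clause (soundness) I would induct on the length $n$ of the run $C,\sigma\toa^\ast C',\sigma'$, maintaining the invariant that some pair $(C'',\sigma'')$ exists with $C,\sigma\to^\ast C'',\sigma''$ and $C',\sigma'\toa^\ast C'',\sigma''$. Starting from the IH witness for the prefix, I would extend the common meeting point depending on the last $\toa$ rule. If it is \rname{C}{Then} or \rname{C}{Else} (identical in both semantics), one additional $\to$ step extends the sync path while the $\toa^\ast$ path stays the same. If it is \rname{C}{Com-S}, no synchronous step has yet been represented; however, the newly created in-flight message $\asrecv{v}{\pid q}$ can always be completed by a subsequent \rname{C}{Com-R}, and together these two $\toa$ steps match one \rname{C}{Com} step in $\to$, so I would extend the $\toa^\ast$ side by this extra \rname{C}{Com-R} and the $\to$ side by one \rname{C}{Com}. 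If the last rule is \rname{C}{Com-R}, the corresponding \rname{C}{Com} in $\to$ has already been accounted for when its matching \rname{C}{Com-S} was taken, and no new sync step is needed. The \rname{C}{Struct} case reduces to the preceding ones, since \rname{C}{Com-Unfold} alone does not change the state.

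The hardest part will be verifying that every pending in-flight receive can actually be completed. The crucial ingredient is the fact that $\pn(\asrecv{v}{\pid q})=\{\pid q\}$, which lets the receive commute via \rname{C}{Eta-Eta}, \rname{C}{Eta-Rec} and \rname{C}{Eta-Cond} past any action involving only other processes. Because the receive was introduced at the exact textual position of the original $\gencom$, it remains the \emph{next} action for $\pid q$ in its enclosing context, so after completing $\pid q$'s earlier pending receives (in the order in which their matching sends fired) it can be brought to the top and fired. Theorem~\ref{thm:async-df} then provides a safety net guaranteeing that this completion strategy never deadlocks, and the induction goes through.
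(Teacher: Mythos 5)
Your completeness direction is exactly the paper's: all rules except \rname{C}{Com} transfer verbatim, and \rname{C}{Com} is simulated by \rname{C}{Com-Unfold} (under \rname{C}{Struct}) followed by \rname{C}{Com-S} and \rname{C}{Com-R}. Your single-step analysis for soundness is also essentially the paper's observation that one $\toa$ step can be closed to a synchronous step by completing the pending receive, with Theorem~\ref{thm:async-df} guaranteeing that the completion never gets stuck.

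The gap is in how you assemble the soundness induction. The induction hypothesis hands you a meeting point $C''_1,\sigma''_1$ together with a joining path $C_1,\sigma_1\toa^\ast C''_1,\sigma''_1$, but the final step of the run, $C_1,\sigma_1\toa C',\sigma'$, also emanates from $C_1,\sigma_1$ --- it is \emph{not} a step taken at $C''_1$. So you cannot simply ``extend the common meeting point depending on the last $\toa$ rule'': you must first commute that last step past the entire joining path (or discover that the joining path already performed it, which happens precisely when the last rule is a \rname{C}{Com-R} whose receive was consumed while closing an earlier send). Your claims that for \rname{C}{Then}/\rname{C}{Else} ``the $\toa^\ast$ path stays the same'', and that for \rname{C}{Com-S} you can just append the matching \rname{C}{Com-R}, are exactly the statements that need this commutation, and they are asserted rather than proved. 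The paper isolates the missing ingredient as a separate lemma --- the diamond property of $\toa$: if $C,\sigma\toa C',\sigma'$ and $C,\sigma\toa C'',\sigma''$ by distinct steps, the two configurations rejoin in one step each --- and then closes the square between the last step and the joining path by iterating this diamond (with the rejoining step on one side being optional, to cover the ``already performed'' case). Your appeal to $\pn(\asrecv{v}{\pid q})=\{\pid q\}$ and the swap rules is the right raw material for proving that diamond, but as written your induction step does not go through without stating and using such a confluence property explicitly.
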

\begin{proof}[Proof (Sketch)]
Proving the first property is straightforward.
The only non-trivial case is when the transition $C,\sigma \to C',\sigma'$ uses rule $\rname{C}{Com}$, 
and in this case in the asynchronous semantics we can unfold the corresponding communication and reduce 
twice using \rname{C}{Com-S} and \rname{C}{Com-R}.

For the second property, we make two observations.
\begin{enumerate}[(1)]
\item The relation $\toa$ has the diamond property, namely: if $C,\sigma\toa C',\sigma'$ and
  $C,\sigma\toa C'',\sigma''$, then there exist $C'''$ and $\sigma'''$ such that
  $C',\sigma'\toa C''',\sigma'''$ and $C'',\sigma'' \toa C''', \sigma'''$ (see Figure~\ref{fig:proof},
  left).
  This is directly proven by case analysis on the structure of a choreography $C$ with two distinct
  possible transitions.
\item If $C,\sigma \toa C',\sigma'$, then there exist $C''$ and $\sigma''$ such that
  $C,\sigma \to^{\ast} C'',\sigma''$ and $C',\sigma' \toa^{\ast} C'',\sigma''$.
  This is similar to the proof for the first property.
  If the transition uses rule \rname{C}{Com-S}, then the choreography $C''$ is obtained by firing rule 
  \rname{C}{Com-R}, but it might be necessary to execute additional actions in case the receiver is not 
  ready for that transition yet. This is always possible due to Theorem~\ref{thm:async-df}.
\end{enumerate}

The property now follows by induction on the length of the reduction chain
$C,\sigma\toa^\ast C',\sigma'$.
When this length is $0$, the result is trivial, and when the length is $1$, the result is simply (2) above.
Otherwise, we follow the reasoning displayed in Figure~\ref{fig:proof}, right. 
By assumption, we have that $C,\sigma \toa^{\ast} C_1,\sigma_1 \toa C',\sigma'$ (top of the picture).
By induction hypothesis, there exist $C'_1$ and $\sigma'_1$ such that $C,\sigma \to^{\ast} C'_1,\sigma'_1$ and 
$C_1,\sigma_1 \toa^{\ast} C'_1,\sigma'_1$ (left triangle).
By iterated application of (1) above, there exist $C''_1$ and $\sigma''_1$ such that
$C'_1,\sigma'_1\toa^{?} C''_1,\sigma''_1$ and $C',\sigma' \toa^{\ast} C''_1,\sigma''_1$ (right square).
The reduction $C'_1,\sigma'_1 \toa^{?} C''_1,\sigma''_1$ is optional because it may be the case that the
action in the reduction $C_1,\sigma_1 \toa C',\sigma'$ is already included in the reduction chain
$C_1,\sigma_1 \toa^{\ast} C'_1,\sigma'_1$.
In this case, the thesis follows, taking $C''=C'_1$ and $\sigma''=\sigma'_1$.
Otherwise, we apply (2) above (lower right triangle) to obtain $C''$ and $\sigma''$ such that
$C'_1,\sigma'_1 \to^{\ast} C'',\sigma''$ and $C''_1,\sigma''_1 \toa^{\ast} C'',\sigma''$, and the thesis
follows.
\end{proof}

\begin{figure}[ht]
\[
\xymatrix@R+1em{%
  C,\sigma \ar[r]_(.7)a \ar[d]_(.8)[right]a
  & \ C',\sigma' \ar@{-->}[d]_(.8)[right]a
  &&
  C,\sigma \ar[r]^(.7)\ast_(.7)a \ar@{-->}[dr]+UL^(1)\ast
  & \ C_1,\sigma_1 \ar[r]_(.7)a \ar@{-->}[d]^(.8)[right]\ast_(.8)[right]a
  & \ C',\sigma' \ar@{-->}[d]^(.8)[right]\ast_(.8)[right]a
  \\
  \rule[2ex]{0mm}{0mm} C'',\sigma'' \ar@{-->}[r]_(.7)a
  & \rule[2ex]{0mm}{0mm}\ C''',\sigma'''
  &&
  & \rule[2ex]{0mm}{0mm} C'_1,\sigma'_1  \ar@{-->}[r]^(.7)?_(.7)a \ar@{-->}[dr]+UL^(1)\ast
  & \rule[2ex]{0mm}{0mm}\ C''_1,\sigma''_1 \ar@{-->}[d]^(.8)[right]\ast_(.8)[right]a
  \\
  &
  &&
  && \rule[2ex]{0mm}{0mm}\ C'',\sigma''
}
\]

\caption{The diamond property for $\toa$ (left) and the reasoning in the proof of the second property in
  Theorem~\ref{thm:async-equiv} (right).}
\label{fig:proof}
\end{figure}
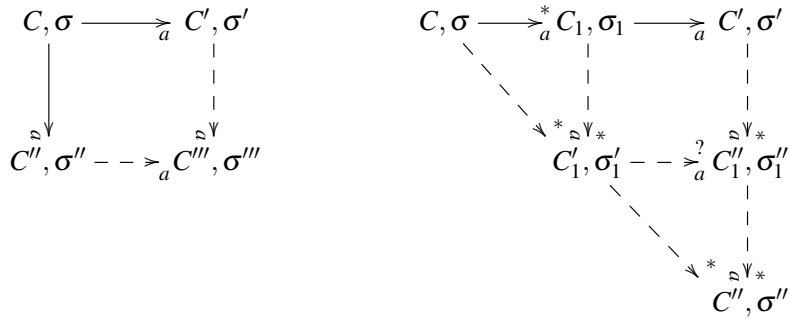

\begin{remark}
Reasoning on asynchronous behaviour is known to be hard for programmers.
That is why our terms for modelling asynchronous communications are \emph{runtime terms}: they are not intended to be 
part of the source language used by developers to program.
In general, programmers do not need to be aware of the development in this section, thanks to
Theorem~\ref{thm:async-equiv}.
Instead, they can just write a choreography thinking in terms of synchronous communications as usual (syntax and
semantics), and then just assume that adopting an asynchronous communications will not lead to any bad behaviour.
In the next section, we show that this abstraction carries over to asynchronous process code generated by EPP.
Since EPP is automatic, developers do not need to worry about asynchrony in process code either.
\end{remark}


\section{Asynchronous Processes}
\label{sec:asp}
In the previous section, we characterized asynchrony at an abstract level, and showed that our
choreography model satisfies the properties we identified.
In this section, we take a more down-to-earth approach: we define directly an asynchronous variant
for SP, extend EPP to the runtime terms introduced earlier, and show that the asynchronous semantics
for both calculi again satisfy the EPP Theorem (Theorem~\ref{thm:oc-cc-sp}).
\journal{\todo{check syntax of runtime terms}}

\subsection{Asynchronous Stateful Processes}

Syntactically, we need to make a slight change to our processes: each process is now also equipped
with a queue of incoming messages.
We name this version of the calculus aSP.
So actors in the asynchronous model have the form $\actor[\rho]{\pid p}\sigma B$, where $\rho$ is a queue
of incoming messages. We sometimes omit $\rho$ when it is empty; in particular, this allows us to view SP networks as 
special cases of aSP networks. A message is a pair $\langle\pid q,m\rangle$, where $\pid q$ is the sender process and 
$m$ is a value, label, or process identifier.

The semantics of aSP consists of rules \rname{P}{Then}, \rname{P}{Else}, \rname{P}{Par}, \rname{P}{Ctx} and
\rname{P}{Struct} from SP (Figure~\ref{fig:sp_semantics}) together with the new rules given in
Figure~\ref{fig:asp_semantics}.
Structural precongruence for aSP is defined exactly as for SP.
%
\begin{figure}[ht]
\begin{eqnarray*}
&\infer[\rname{P}{Com-S}]
{
  \actor[\rhop]{\pid p}{\sigma_{\pid p}}{\asend{\pid q}{e};B_{\pid p}} \parp
  \actor[\rhoq]{\pid q}{\sigma_{\pid q}}{B_{\pid q}}
  \to
  \actor[\rhop]{\pid p}{\sigma_{\pid p}}{B_{\pid p}} \parp \actor[\rho'_{\pid q}]{\pid q}{\sigma_{\pid q}}{B_{\pid q}}
}
{
  e\evals{\sigma_{\pid p}}v
  &
  \rho'_{\pid q}=\rhoq\cdot\langle\pid p,v\rangle
}
\qquad
\infer[\rname{P}{Com-R}]
{
  \actor[\rhoq]{\pid q}{\sigma_{\pid q}}{\arecv{\pid p};B}
  \to
  \actor[\rho'_{\pid q}]{\pid q}{\update{\sigma_{\pid q}}{v}}{B}
}
{
  \rhoq\preceq\langle\pid p,v\rangle\cdot\rho'_{\pid q}
}
\end{eqnarray*}
\caption{Asynchronous Stateful Processes, semantics (new rules).}
\label{fig:asp_semantics}
\end{figure}
We write $\rho \cdot \langle\pid q,m\rangle$ to denote the queue obtained by appending message
$\langle\pid q,m\rangle$ to $\rho$, and $\langle\pid q,m\rangle \cdot \rho$ for the queue with
$\langle\pid q,m\rangle$ at the head and $\rho$ as tail.
We simulate having one separate FIFO queue for each other process by allowing incoming messages from different
senders to be exchanged, which we represent using the congruence $\rho\preceq\rho'$ defined by the rule
$\langle\pid p,m\rangle \cdot \langle\pid q,m'\rangle \preceq \langle\pid q,m'\rangle \cdot \langle\pid p,m\rangle$ if 
$\pid p \neq \pid q$.

All behaviours of SP are valid also in aSP.
\begin{theorem}
  \label{thm:pp-vs-app}
  Let $N$ be an SP network.
  If $N\to N'$ (in SP), then $N_{[]}\to^*N'_{[]}$ (in aSP), where $N_{[]}$ denotes the asynchronous
  network obtained by adding an empty queue to each process.
\end{theorem}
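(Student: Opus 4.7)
The plan is to proceed by induction on the derivation of $N \to N'$ in SP, noting that $(\cdot)_{[]}$ is a purely syntactic translation that leaves behaviours and states unchanged, only equipping each process with an empty queue. Thus, when we look at an arbitrary SP reduction, we only need to check that every SP rule can be mirrored by one or more aSP rules, returning to empty queues at the end.

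The only truly interesting case is \rname{P}{Com}. Starting from $(\actor{\pid p}{\sigma_{\pid p}}{\asend{\pid q}{e};B_1} \parp \actor{\pid q}{\sigma_{\pid q}}{\arecv{\pid p};B_2})_{[]}$, I would first apply \rname{P}{Com-S}: since both queues begin empty, this step leaves $\pid p$'s queue empty and puts $\langle\pid p,v\rangle$ on $\pid q$'s queue, where $e\evals{\sigma_{\pid p}}v$. Then \rname{P}{Com-R} consumes that singleton message, updating $\pid q$'s state to $\update{\sigma_{\pid q}}{v}$ and restoring its queue to empty. The result coincides on the nose with $(N')_{[]}$.

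The rules \rname{P}{Then}, \rname{P}{Else}, \rname{P}{Par}, \rname{P}{Ctx}, and \rname{P}{Struct} are literally shared between SP and aSP, and they do not inspect queues. So \rname{P}{Then} and \rname{P}{Else} transfer as a single step, while in the inductive cases \rname{P}{Par} and \rname{P}{Ctx} I would apply the induction hypothesis to the premise and then iterate the same aSP rule over the resulting chain of reductions. For \rname{P}{Struct} I additionally need the auxiliary observation that if $N \precongr M$ in SP then $N_{[]} \precongr M_{[]}$ in aSP: this is immediate because the rules defining $\precongr$ (\rname{S}{Unfold}, \rname{S}{PZero}, \rname{S}{NZero}, \rname{S}{ProcEnd}) never mention queues, and the terminated processes eliminated by \rname{S}{PZero} in the image of $(\cdot)_{[]}$ have empty queues by construction.

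The main (and only) subtle point I anticipate is this last observation on $\precongr$; in particular, making sure that applying \rname{S}{PZero} at intermediate points of the simulated chain remains sound. Since each simulated step begins and ends at configurations of the form $M_{[]}$, all queues are empty at every ``checkpoint'', and the single transient moment with a non-empty queue in the \rname{P}{Com} case occurs strictly between \rname{P}{Com-S} and \rname{P}{Com-R}, where no precongruence is invoked. Once this is spelled out, the induction goes through without further complication.
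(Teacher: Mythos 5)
Your proposal is correct and follows essentially the same route as the paper's (much terser) proof: the only rule of SP absent from aSP is \rname{P}{Com}, which is simulated by \rname{P}{Com-S} followed by \rname{P}{Com-R}, while all other rules carry over unchanged. Your extra care about \rname{S}{PZero} and empty queues at the ``checkpoints'' is a detail the paper leaves implicit, but it does not change the argument.
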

\begin{proof}
  Straightforward by case analysis: the only reduction rule in SP that is not present in aSP is that for
  communications, which can be simulated by applying rules \rname{P}{Com-S} and \rname{P}{Com-R} in
  sequence.
\end{proof}

The converse is not true, so the relation between SP and aSP is not so strong as that between MC and aMC (stated in 
Theorem~\ref{thm:async-equiv}).
This is because of deadlocks: in SP, a communication action can only take place when the sender and receiver are
ready to synchronize; in aSP, a process can send a message to another process, even though the
intended recipient is not yet able to receive it.
For example, the network $\actor{\pid p}{\sigma_{\pid p}}{\asend{\pid q}1}\parp\actor{\pid q}{\sigma_{\pid q}}{\nil}$
is deadlocked in SP, but its counterpart in aSP (obtained by adding empty queues at each process) reduces to 
$\actor[\langle \pid p, 1 \rangle]{\pid q}{\sigma_{\pid q}}{\nil}$. (This network is not equivalent to $\nil$, since 
there is a non-empty queue.)

More interestingly, the network
\begin{equation}
   \label{eqn:deadlock}
   \actor{\pid p}{\sigma_{\pid p}}{\asend{\pid q}e_1;\arecv{\pid r}} \parp
   \actor{\pid q}{\sigma_{\pid q}}{\asend{\pid r}e_2;\arecv{\pid p}} \parp
   \actor{\pid r}{\sigma_{\pid r}}{\asend{\pid p}e_3;\arecv{\pid q}}
 \end{equation}
is deadlocked in SP, but reduces to $\nil$ in aSP (all queues are eventually emptied).
It is possible to extend MC with communication primitives that capture this type of behaviour, as
discussed in~\cite{CLM17}; we briefly discuss this point in Section~\ref{sec:concl}.

\subsection{Asynchronous EndPoint Projection}

Defining an EPP from aMC to aSP requires extending the previous definition with clauses for the new
runtime terms, which populate the local queues in the projections.
Intuitively, when compiling, e.g.,~$\asrecv{v}{\pid q}$, we add a message containing $v$ at the top of 
$\pid q$'s queue.

There are two problems with this approach.
The first is a syntactic issue: each message in the queues of aSP processes must include the name of its sender, but 
that information is not present in the runtime term $\asrecv v{\pid q}$.
This is a minor issue that can be dealt with by annotating send and receive actions with the name of the process
sending them -- writing, e.g., $\asrecv{\langle\pid p,x\rangle}{\pid q}$ and~$\asrecv{\langle\pid p,v\rangle}{\pid q}$.
Changing the syntax of runtime terms and rule~\rname{C}{Com-Unfold} in this way is trivial, and we assume this
annotated terms in the remainder of this session.\footnote{We could have defined the runtime terms for aMC annotated
  from the start. However, we felt that it would be unnatural to include them in a choreographic presentation, as they
  are only needed for projecting runtime terms -- a feature that is a technicality required for the proof of
  Theorem~\ref{thm:aepp} below, since the programmer should not write runtime terms anyway.}

The second problem arises because we can write choreographies that use runtime terms in a ``wrong'' way, for which
Theorem~\ref{thm:oc-cc-sp} no longer holds.

\begin{example}
  \label{ex:problem}
  Consider the choreography $C=\com{\pid p.1}{\pid q};\asrecv{\langle\pid p,2\rangle}{\pid q}$.
  If we naively project it as described informally, we obtain
  $
  \actor[{[]}]{\pid p}{\sigma(\pid p)}{\asend{\pid q}1}\parp\actor[\langle\pid p,2\rangle]{\pid q}{\sigma(\pid q)}{\arecv{\pid p};\arecv{\pid p}}\,,
  $
where $[]$ is the empty queue, and $\pid q$ will receive $2$ before it receives $1$.
\end{example}

To avoid this undesired behaviour, we restrict ourselves to \emph{well-formed} choreographies: those that can arise
from executing a choreography that does not contain runtime terms (i.e., a program). Since runtime terms are supposed 
to be hidden from the programmer anyway, this restriction does not make us lose any generality in practice.

\begin{definition}[Well-formedness]
  A choreography $C$ in aMC containing runtime terms is \emph{well-formed} if
  $\eta_1;\ldots;\eta_n;C^{MC} \precongr^- C$, where:
  \begin{itemize}
  \item $\precongr^-$ is structural precongruence without rule \rname{C}{Unfold};
  \item each $\eta_i$ is an instantiated receive action of the form $\asrecv{\langle\pid p_i,v_i\rangle}{\pid q_i}$%
    \journal{ or $\acom{\bullet_{\pid p}}{\pid q[\ell]}\ell$};
  \item $C^{MC}$ is an MC choreography (i.e., a choreography without runtime terms).
  \end{itemize}
\end{definition}
Well-formedness is decidable, since the set of choreographies equivalent up to $\precongr^-$ is decidable.
More efficiently, one can check that $C$ is well-formed by swapping all runtime actions to the beginning and
folding all paired send/receive terms.
Furthermore, choreography execution preserves well-formedness; in particular, the problematic choreography
from Example~\ref{ex:problem} is not well-formed. More generally, we can 
use well-formedness in the remainder of this section to reason about EPP.

\begin{definition}[Asynchronous EPP from aMC to aSP]
  Let $C$ be a well-formed aMC choreography and $\sigma$ be a state.
  Without loss of generality, we assume that $C$ does not contain $\genasend$ actions.\footnote{By well-formedness, we
  can always rewrite $C$ to an equivalent choreography satisfying this condition; such a choreography is also
  guaranteed not to contain $\asrecv x{\pid q}$ actions.}
  The EPP of $C$ and $\sigma$ is defined as
  \[
  \textstyle\epp{C,\sigma}{} =
  \prod_{\pid p \in \pn(C)} \actor[\stepp{C}{\pid p}]{\pid p}{\sigma(\pid p)}{\epp{C}{\pid p}}
  \]
  where $\epp{C}{\pid p}$ is defined as in Figure~\ref{fig:cc_epp} with the extra rule in
  Figure~\ref{fig:aepp} and $\stepp{C}{\pid p}$ is defined by the rules in
  Figure~\ref{fig:aepp_state}.
\end{definition}
\begin{figure}[ht]
\[
\epp{\asrecv{\langle\pid p,v\rangle}{\pid q};C}{\pid r}=
  \begin{cases}
    \arecv{\pid p};\epp{C}{\pid r} & \text{if } \pid r = \pid q \\
    \epp{C}{\pid r} & \text{otherwise}
  \end{cases}
\]
\caption{Minimal Choreographies, asynchronous behaviour projection (new rule).}
\label{fig:aepp}
\end{figure}

\begin{figure}[ht]
\begin{eqnarray*}
&\stepp{\asrecv{\langle\pid p,v\rangle}{\pid q};C}{\pid r}=
  \begin{cases}
    \langle\pid p,v\rangle\cdot\stepp{C}{\pid r} & \text{if } \pid r = \pid q \\
    \stepp{C}{\pid r} & \text{otherwise}
  \end{cases}
\qquad
\begin{array}l
\stepp{\gencond;C}{\pid r} = \stepp{C_1}{\pid r}\cdot\stepp{C}{\pid r}\\[1ex]
\stepp{\eta;C}{\pid r} = \stepp{C}{\pid r}
\end{array}
\end{eqnarray*}
\caption{Minimal Choreographies, projection of messages in transit.}
\label{fig:aepp_state}
\end{figure}

In the last case of the definition of $\stepp{C}{\pid p}$ (bottom-right), $\eta$ ranges over all
cases that are not covered previously.
The rule for the conditional may seem a bit surprising:
\journal{in the case of projectable choreographies,
mergeability and well-formedness together imply that unmatched receive actions 
at a process must occur in the same order in both branches.}%
\conf{projectability of choreographies implies that unmatched receive actions at a process must occur in 
the same order in both branches.}
We could alternatively define projection only for well-formed choreographies in the ``canonical
form'' implicit in the definition of well-formedness.

With this definition, we can state an asynchronous variant of Theorem~\ref{thm:oc-cc-sp}.
\begin{theorem}[Asynchronous EPP Theorem]
  \label{thm:aepp}
  If $C$ is a projectable and well-formed MC choreography, then, for all $\sigma$:
  \begin{itemize}
  \item (Completeness) if $C,\sigma \to C',\sigma'$, then $\epp{C,\sigma}{} \to \epp{C',\sigma'}{}$;
  \item (Soundness) if $\epp{C,\sigma}{} \to N$, then $C,\sigma \to C',\sigma'$ for some $C'$ and $\sigma'$
    such that $\epp{C',\sigma'}{} \precongr N$.
  \end{itemize}
\end{theorem}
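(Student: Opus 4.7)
The plan is to prove both Completeness and Soundness by case analysis on the applicable reduction rule, reusing Theorem~\ref{thm:oc-cc-sp} for the cases that do not involve communication. Rules \rname{C}{Then}, \rname{C}{Else}, \rname{C}{Ctx}, and \rname{C}{Struct}, together with their network-side counterparts, go through essentially unchanged, since the queues and the new runtime terms interact only with the communication rules. The genuinely new cases are \rname{C}{Com-S} and \rname{C}{Com-R} on the choreography side, matched with \rname{P}{Com-S} and \rname{P}{Com-R} on the network side.

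For Completeness, assume first that $C\precongr\genasend;C''$ and the reduction is by \rname{C}{Com-S} with $e\eval\sigma{\pid p}v$. By the Barendregt convention, the tag $x$ occurs exactly once in $C''$, inside a unique runtime receive $\asrecv{\langle\pid p,x\rangle}{\pid q}$. The projection places $\asend{\pid q}{e}$ at the head of $\pid p$'s behaviour, so rule \rname{P}{Com-S} fires and appends $\langle\pid p,v\rangle$ to $\pid q$'s queue; this matches the change in $\stepp{C''[v/x]}{\pid q}$, because the only contribution of $\asrecv{\langle\pid p,x\rangle}{\pid q}$ to that queue after substitution is precisely this entry, in exactly the corresponding position. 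For \rname{C}{Com-R}, well-formedness ensures that the leading runtime receive at $\pid q$ places $\langle\pid p,v\rangle$ at the head of $\stepp{C}{\pid q}$ and that $\pid q$'s behaviour begins with $\arecv{\pid p}$, so rule \rname{P}{Com-R} consumes the message and updates $\sigma(\pid q)$ as required.

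For Soundness I dispatch on the network reduction. When \rname{P}{Com-S} fires at $\pid p$, the shape of $\pid p$'s projection together with well-formedness forces $C\precongr\genasend;C''$; applying \rname{C}{Com-S} in aMC produces a $C'$ whose EPP agrees, up to $\preceq$, with the resulting network, since the appended queue entries coincide. When \rname{P}{Com-R} fires at $\pid q$ and extracts $\langle\pid p,v\rangle$ from its queue via $\preceq$, the fact that $\pid q$'s behaviour begins with $\arecv{\pid p}$, combined with well-formedness, forces the presence of a runtime receive $\asrecv{\langle\pid p,v\rangle}{\pid q}$ reachable at the head of $C$ through $\precongr$; firing \rname{C}{Com-R} yields the desired $C'$, and the updated state matches.

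The main obstacle is coordinating the queue congruence $\preceq$ on the aSP side with the structural swaps of $\precongr$ on the aMC side, since reorderings of messages from distinct senders surface differently in the two calculi. I would therefore prove two auxiliary lemmas up front: first, that well-formedness is preserved by $\toa$, so that $\epp{C',\sigma'}{}$ is defined whenever $C,\sigma\toa C',\sigma'$ and the theorem can be iterated; second, that $\stepp{C}{\pid r}$ is invariant under $\precongr$ modulo $\preceq$, so that the EPP is well-defined as an aSP network up to $\preceq$. A secondary subtlety concerns the clause $\stepp{\gencond;C}{\pid r}=\stepp{C_1}{\pid r}\cdot\stepp{C}{\pid r}$, which ignores $C_2$: projectability together with well-formedness guarantees that $C_1$ and $C_2$ carry the same sequence of pending receives at every process $\pid r\neq\pid p$, so the definition is consistent regardless of which branch is eventually taken.
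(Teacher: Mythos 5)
The paper states Theorem~\ref{thm:aepp} without giving a proof, so there is no official argument to measure you against; judged on its own terms, your strategy is the natural one and I do not see a fatal gap in it. The decomposition is right: the non-communication cases carry over from Theorem~\ref{thm:oc-cc-sp} with queues threaded through unchanged, and the real content sits in matching \rname{C}{Com-S}/\rname{C}{Com-R} with \rname{P}{Com-S}/\rname{P}{Com-R}. The two auxiliary lemmas you isolate -- preservation of well-formedness under $\toa$ (which the paper itself asserts but does not prove) and invariance of $\stepp{C}{\pid r}$ under $\precongr$ -- are exactly the infrastructure needed to make the statement iterable and the projection well-defined, and your observation about the conditional clause matches the paper's own remark that projectability forces the pending receives of the two branches to agree at every $\pid r\neq\pid p$. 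The one place where your sketch asserts more than it argues is the claim that the freshly instantiated receive lands ``in exactly the corresponding position'' of $\stepp{C''[v/x]}{\pid q}$: since \rname{P}{Com-S} appends at the \emph{tail} of $\pid q$'s queue while the theorem demands $\epp{C',\sigma'}{}$ on the nose (not up to $\preceq$), you need the observation that in a well-formed choreography every pending receive at $\pid q$ can be hoisted past all other actions involving $\pid q$, so all of them syntactically precede the communication being expanded; since runtime receives at the same process can never be swapped with one another ($\pn$ is $\{\pid q\}$ for both), the new receive necessarily becomes the \emph{last} entry of $\stepp{C'}{\pid q}$, matching the append. Spelling that out, together with the dual per-sender FIFO argument in the \rname{P}{Com-R} soundness case (where $\preceq$ genuinely enters), would complete the proof.
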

As a consequence, Corollary~\ref{cor:df-sp} applies also to the asynchronous case: the processes
projected from MC into aSP are deadlock-free, even when they contain runtime terms.


As usual, the hypotheses in Theorem~\ref{thm:aepp} are not necessary, and this result also holds for some choreographies that are not well-formed.
For example, the network in~\eqref{eqn:deadlock}
\conf{can be written as the projection of a choreography with runtime terms}%
\journal{
is the projection of
\[
\assend{\pid p}{e_1}{\pid q};
\assend{\pid q}{e_2}{\pid r};
\assend{\pid r}{e_3}{\pid p};
\asrecv{\pid p}{\pid r};
\asrecv{\pid q}{\pid p};
\asrecv{\pid r}{\pid q};
\]
with an adequately defined state $\sigma$}, but this choreography is not structurally precongruent to
any choreography obtained from execution of an MC choreography.


\section{Related Work}
\label{sec:related}
The first work introducing an asynchronous semantics to choreographies is~\cite{CM13}, as described in the
introduction.
The same approach was later adapted to compositional choreographies~\cite{MY13} and multiparty session
types~\cite{HYC16}.
The models presented in these works all suffer from the shortcoming discussed in the introduction: choreographies can
reduce to states that would normally not be reachable in the real world.

We can now give a formal example of the consequences of this discrepancy.
\begin{example}
Let $C$ be the choreography from the introduction: $C \defeq \com{\pid p.1}{\pid q};\com{\pid p.2}{\pid r}$.
If we adopted the asynchronous semantics from~\cite{CM13}, then the reduction
\[C,\sigma \to \com{\pid p.1}{\pid q}, \upd\sigma{\pid r}2\]
would be possible for any $\sigma$.
This shows that Theorem~\ref{thm:aepp} (the completeness direction) does not hold in this semantics, since
$\epp{C,\sigma}{}$ cannot reduce to any $N$ such that $\epp{\com{\pid p.1}{\pid q}, \upd\sigma{\pid r}2}{} \precongr N$
(we have to consume the first output by $\pid p$ first).

Therefore, if we want to formalise the correspondence between a choreography and its EPP in this setting, we need to 
consider multiple steps.
In this example, specifically, we can observe that
\[\com{\pid p.1}{\pid q}, \upd\sigma{\pid r}2 \to \nil, \upd{\upd\sigma{\pid r}2}{\pid q}1\]
and that $\epp{C,\sigma}{} \to^* \epp{\nil,\upd{\upd\sigma{\pid r}2}{\pid q}1}{}$.
\end{example}

In general, the EPP Theorem with this kind of asynchronous semantics is weaker and more complicated.
We report it here by adapting the formulation from~\cite[Chapter 2]{M13:phd} (the full version of~\cite{CM13}) to our
notation:
\begin{theorem}
  \label{thm:aepp-bad}
  If $C$ is a projectable choreography, then, for all $\sigma$:
  \begin{itemize}
  \item (Completeness) if $C,\sigma \to C',\sigma'$, then $C',\sigma' \to^* C'',\sigma''$ for some $C'',\sigma''$
and $\epp{C,\sigma}{} \to^* \epp{C'',\sigma''}{}$;
  \item (Soundness) if $\epp{C,\sigma}{} \to^* N$, then $N \to^* N'$ for some $N'$, and $C,\sigma \to^* 
C',\sigma'$ for some $\sigma'$ and $C'$
    such that $\epp{C',\sigma'}{} \precongr N'$.
  \end{itemize}
\end{theorem}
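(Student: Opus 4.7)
The plan is to prove both directions by induction on the length of the reduction sequences, with case analysis on the last reduction rule used, relying on the fact that the conclusion only asks for eventual convergence (up to further reductions on both sides). This extra slack is essential, because unlike the setting of Theorem~\ref{thm:aepp}, a single application of the special asynchrony rule of \cite{CM13} does not correspond to a bounded number of network steps: it may commit a communication whose preceding siblings are still pending in the network.

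For completeness, I would analyse the single step $C,\sigma \to C',\sigma'$. If it uses a standard rule (\rname{C}{Then}, \rname{C}{Else}, or ordinary \rname{C}{Com}), then $\epp{C,\sigma}{}$ simulates it in a bounded number of asynchronous process steps (one for a conditional, a send followed by a receive for a communication), and we may take $C''=C'$, $\sigma''=\sigma'$. If the step uses the asynchrony rule to commit some $\com{\pid p.e}{\pid r}$ while a preceding $\com{\pid p.e'}{\pid q}$ remains, I would extend $C'$ by firing the pending communication(s) involving the sender $\pid p$, obtaining $C''$ in which every send that $\pid p$ has logically performed has also been received. The matching network run fires $\pid p$'s two sends in order (populating $\pid q$'s and $\pid r$'s queues) followed by the two receives, which is permitted by the queue-based asynchronous semantics, yielding exactly $\epp{C'',\sigma''}{}$.

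For soundness, given $\epp{C,\sigma}{} \to^* N$, I would reconstruct a choreography reduction step by step. Each network receive step matches directly a \rname{C}{Com} reduction in the choreography. Each network send step corresponds to applying the asynchrony rule at some subterm of $C$, since the corresponding receive may be blocked by earlier interactions. After exhausting all such matches, $N$ may still contain pending queued messages; I would further reduce $N$ to $N'$ by draining these queues via the process receives. A dual chain of choreographic reductions (once again using the asynchrony rule to commute interactions, and deadlock freedom from Theorem~\ref{thm:df-by-design} to keep progressing) produces the desired $C',\sigma'$ with $\epp{C',\sigma'}{} \precongr N'$.

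The main obstacle is the bookkeeping of in-flight messages. Without runtime terms, the choreography eagerly updates a receiver's state when the asynchrony rule is fired, while the corresponding network trace places the message in a queue and updates the state only much later; across several applications of the asynchrony rule, the two representations can drift apart by an unbounded number of pending interactions. The core of the proof is therefore an invariant relating $C$ and $N$ that records, for every communication the choreography has already absorbed via the asynchrony rule, a matching in-flight message in $N$'s queues. One must then verify that this invariant is preserved by the asynchrony rule and by both \rname{P}{Com-S} and \rname{P}{Com-R}, and that it can always be re-established by finitely many extra reductions on either side — which is precisely the technical nuisance our runtime-term formulation in Section~\ref{sec:async} was designed to avoid.
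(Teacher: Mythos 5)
First, a point of comparison: the paper does not prove Theorem~\ref{thm:aepp-bad} at all. It is stated in Section~\ref{sec:related} as an adaptation of a result from~\cite[Chapter 2]{M13:phd} (the full version of~\cite{CM13}), quoted only to contrast its weak $\to^*$-formulation with the lockstep correspondence of Theorem~\ref{thm:aepp}. So there is no in-paper proof to measure you against; I can only assess your proposal on its own terms. Your overall strategy --- a weak simulation argument in which both sides are allowed closing reductions, governed by an invariant that pairs each communication already absorbed by the choreography with an in-flight message in the network's queues --- is the right shape for this statement, and your diagnosis of \emph{why} the proof is messy (the choreography updates the receiver's state eagerly while the network defers it to a queued delivery, so the two drift apart by unboundedly many pending interactions) is exactly the point the paper is making by quoting this theorem.

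The one genuine weakness is that your matching convention for the soundness direction is internally inconsistent. You say both that ``each network receive step matches directly a \rname{C}{Com} reduction'' and that ``each network send step corresponds to applying the asynchrony rule at some subterm of $C$.'' In the semantics of~\cite{CM13} there are no runtime terms: a communication is a \emph{single} atomic choreography reduction (send, delivery, and state update all at once), whether it fires at the top via \rname{C}{Com} or out of order via the asynchrony rule. Hence exactly one of the two network actions (the send or the receive) can be the trigger for that choreography step; the other must be either recorded in the invariant as a debt or deferred to the closing segment $N\to^* N'$. If you let both trigger a choreographic reduction you double-count. Either convention can be made to work --- triggering on the send means the choreography runs ahead and $N\to^* N'$ drains the queues to catch up, which fits the shape of the soundness clause best --- but you must commit to one and restate the invariant accordingly. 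Relatedly, you should justify that the closing reductions $N\to^* N'$ always exist, i.e.\ that every queued message can eventually be consumed; this does not come for free and needs an argument about the structure of projected networks, not just Theorem~\ref{thm:df-by-design} on the choreography side.
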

Compared to our Theorem~\ref{thm:aepp} above, Theorem~\ref{thm:aepp-bad} is missing the step-by-step correspondence
between choreographies and their projections, and therefore does not say much about the intermediate steps.

In~\cite{DY13}, multiparty session types are equipped with runtime terms that represent messages in transit
in asynchronous communications, similarly to our approach. Differently from our model, the semantics 
in~\cite{DY13} uses a labelled transition system (instead of out-of-order execution) to identify when a
communication can be executed asynchronously. The difference between these two systems seems to be mostly
a matter of presentation, but their expressive powers are difficult to compare formally because a counterpart 
to Theorem~\ref{thm:async-equiv} is not provided for the model of~\cite{DY13}.

A more recent development that is nearer to ours is the model of Applied Choreographies~\cite{GGM15}, where
out-of-order execution (modelled as structural equivalences) is used to swap independent partial choreographic
actions, which contain terms that are similar to our runtime asynchronous terms.
However, the asynchronous terms in Applied Choreographies are not enough to ensure that the semantics is sound;
therefore, these choreographies also need to include queues to store messages in transit.
This makes their development substantially more complicated than the one we propose.
Furthermore, the work in~\cite{GGM15} focuses on implementation models, and does not provide a formal definition of
asynchrony for choreographies as in this paper.
There is also no correspondence result relating Applied Choreographies to a standard synchronous choreography
semantics, although we conjecture that it is possible to map a fragment of that language into an asynchronous
extension of MC in the sense of our definition.

Our approach relies on out-of-order execution for choreographic interactions, which was first introduced in~\cite{CM13} 
to capture parallel execution. We extended it to support the swapping of interactions supported by 
asynchronous message passing. Out-of-order execution does not introduce any burden on the 
programmer, since only non-interfering interactions can be swapped and all swappings are thus safe by design (more 
precisely, swaps correspond to the parallel semantics of typical process calculi~\cite{M13:phd}); instead, they just 
model different safe ways of executing the same choreography.


\section{Conclusions and Future Work}
\label{sec:concl}
We presented a definition of asynchrony in the minimal choreography language MC and showed that we can define an
asynchronous semantics for an extension of MC (aMC) that respects this definition.
This construction was deliberately made in a modular way (it changes only the rules for communications, leaving the 
rest untouched), so that it can be easily applied to more expressive languages.
We discuss a few relevant cases.

The first case regards label selection, usually written $\gensel$, which is a widespread primitive in 
choreographies~\cite{CHY12}. In a selection, the sender informs the receiver of a local choice, and the receiver can 
use this information to change its behaviour. Selections are mostly relevant for extending the domain of EPP, 
which is orthogonal to this work. Our approach applies directly also to selections, by adding asynchronous runtime terms 
that are similar to the ones for value communications but carry selection labels instead of values. Name 
mobility~\cite{CM13,CM17:forte} can also be treated in a similar way.

Instead of out-of-order execution, some choreography models also include explicit parallel composition, e.g., $C\parp 
C'$~\cite{CHY12,QZCY07}. Most behaviours of $C \parp C'$ are captured by out-of-order execution, for example
$\com{\pid p.e}{\pid q} \parp \com{\pid r.e'}{\pid s}$ is equivalent to
$\com{\pid p.e}{\pid q}; \com{\pid r.e'}{\pid s}$ in MC (due to rule~\rname{C}{Eta-Eta} in
Figure~\ref{fig:cc_semantics}) -- see~\cite{CM13} for a deeper discussion.
Generalising our construction to supporting also an explicit parallel operator is straightforward (structural 
precongruence is extended homomorphically without surprises).

A more interesting extension is multicom~\cite{CLM17}, a primitive that considerably extends the 
expressivity of choreographies by allowing for general criss-cross communications. A prime example is the asynchronous 
exchange $\{\gencom, \com{\pid q.e'}{\pid p}\}$, which allows two processes to exchange message without waiting for each 
other (this is the building block, e.g., of the alternating 2-bit protocol given in~\cite{CLM17}).
Multicoms crucially depend on asynchrony, making them an interesting case study in our context.
Our development applies immediately to multicoms, since we can just apply the expansion rule $\rname{C}{Com-Unfold}$ to 
all communications in a multicom simultaneously.

Likewise, we conjecture that our development is applicable to the models in~\cite{DGGLM15,LNN16}.

\smallpar{Acknowledgements}
This work was supported by CRC (Choreographies for Reliable and efficient Communication software), grant no.\ DFF--4005-00304 from the Danish Council for Independent Research, by grant DFF-1323-00247 from the Danish Council for Independent Research, Natural Sciences, and by the Open Data Framework project at the University of Southern Denmark.


\bibliographystyle{eptcs}
\bibliography{biblio}

\end{document}